\theoremstyle{plain}\newtheorem{theorem}{Theorem}
\theoremstyle{plain}\newtheorem{corollary}[theorem]{Corollary}
\theoremstyle{plain}\newtheorem{proposition}[theorem]{Proposition}
\theoremstyle{plain}\newtheorem{lemma}[theorem]{Lemma}
\theoremstyle{plain}\newtheorem{definition}{Definition}
\theoremstyle{plain}\newtheorem{remark}{Remark}
\theoremstyle{plain}
\newenvironment{notation}%
{\noindent{\bf Notation.\space}\it\ignorespaces}%
{\par\noindent%
\ignorespacesafterend}
\newcommand{\mbb}[1]{\mathbb{#1}}
\newcommand{\mcl}[1]{\mathcal{#1}}
\newcommand{\mfk}[1]{\mathfrak{#1}}
\newcommand{\mtt}[1]{\mathtt{#1}}
\newcommand{\mrm}[1]{\mathrm{#1}}
\newcommand{\ltrue}{\tt t\!t}
\newcommand{\lfalse}{\tt f\!f}
\newcommand{\support}[1]{\lceil #1 \rceil}
\newcommand{\totalweight}[1]{\llceil #1 \rrceil}
\def\llceilg{\left\lceil\kern-4pt\left\lceil}
\def\rrceilg{\right\rceil\kern-4pt\right\rceil}
\newcommand{\supportg}[1]{\left\lceil #1 \right\rceil}
\newcommand{\totalweightg}[1]{\llceilg #1 \rrceilg}
\newcommand{\defeq}{\triangleq}
\newcommand{\defiff}{\stackrel{\triangle}{\iff}}
\newcommand{\cat}[1]{\textbf{#1} }
\newcommand{\hole}{ {\mbox{\large\bf-}} }
\newcommand{\id}{\mathrm{Id}}
\newcommand{\f}[1]{\mathcal{F}_{#1}}
\newcommand{\ff}[1]{\f{\mathfrak{#1}}}
\newcommand{\fw}{\ff{W}}
\newcommand{\pf}{\mathcal{P}_{\!\!f}}
\newcommand{\pepasync}[1]{\raisebox{-1.0ex}{$\;\stackrel{\mbox{\large $\rhd\hspace{-1.2ex}\lhd$}}{\scriptscriptstyle #1}\,$}}
\newsavebox{\xtaTempBox}
\newlength{\xtaMinLen}
\newcommand{\rightarrowh}[1]{%
  \mathrel{
  \tikz[baseline=-.75ex]{
    \setlength{\xtaMinLen}{1.2em}
    \draw[#1] (0,0) -- (\xtaMinLen,0);}}}
\newcommand{\xrightarrowh}[4]{%
  \sbox{\xtaTempBox}{\hbox{\( \scriptstyle\mkern#3#2\mkern#4 \)}}
  \mathrel{
  \tikz[baseline=-0.6ex]{
    \setlength{\xtaMinLen}{1.2em}
    \setlength{\xtaMinLen}{\maxof{\wd\xtaTempBox}{\xtaMinLen}}
    \draw[#1] (0,0) --
    node[midway,above=-0.4ex]{\usebox\xtaTempBox} (\xtaMinLen,0);}}}
\newcommand{\rightarrowu}{\rightarrowh{-open triangle 60}}
\newcommand{\xrightarrowu}[1]{\xrightarrowh{-open triangle 60}{#1}{7mu}{17mu}}
\newcommand{\xrightarroww}[1]{\xrightarrowh{-triangle 60}{#1}{7mu}{17mu}}
\newcommand{\xrightarrowg}[1]{\xrightarrowh{-angle 90}{#1}{7mu}{9mu}}
\newcommand{\xrightarrowt}[1]{\xrightarrowh{-open triangle 60 reversed}{#1}{7mu}{17mu}}
\title{GSOS for non-deterministic processes\\ with quantitative aspects}
\author{Marino Miculan \qquad\qquad Marco
  Peressotti \institute{Laboratory of Models and Applications of Distributed Systems, \\
Department of Mathematics and Computer Science, University of Udine, Italy}\\
  \\\href{mailto:marino.miculan@uniud.it}{\tt marino.miculan@uniud.it}
  \qquad\qquad
  \href{mailto:marco.peressotti@uniud.it}{\tt marco.peressotti@uniud.it}}
\begin{document}
\maketitle

\begin{abstract}
Recently, some general frameworks have been proposed as unifying
theories for processes combining non-determinism with quantitative
aspects (such as probabilistic or stochastically timed executions),
aiming to provide general results and tools.  This paper provides
two contributions in this respect.  First, we present a general GSOS
specification format (and a corresponding notion of bisimulation)
for non-deterministic processes with quantitative aspects.
These specifications define labelled transition systems according to
the ULTraS model, an extension of the usual LTSs where the
transition relation associates any source state and transition
label with \emph{state reachability weight functions} (like, e.g.,
probability distributions). This format, hence called \emph{Weight
Function SOS} (WFSOS), covers many known systems and their
bisimulations (e.g.~PEPA, TIPP, PCSP) and GSOS formats (e.g.~GSOS,
Weighted GSOS, Segala-GSOS, among others).

The second contribution is a characterization of these systems as
coalgebras of a class of functors, parametric on the weight
structure. This result allows us to prove soundness of the
WFSOS specification format, and that bisimilarities induced by these
specifications are always congruences.
\end{abstract}

\section{Introduction}


Process calculi and labelled transition systems have proved very
successful for modelling and analysing concurrent, non-deterministic
systems.
This success has led to many extensions dealing with quantitative
aspects, by adding further informations to the transition relation
(e.g.~probability rates or stochastic rates); see
\cite{bg98:empa,denicola13:ultras,hhk2002:tcs,hillston:pepabook,ks2013:w-s-gsos,pc95:cj}
among others.
These calculi have proved very effective in modelling and reasoning
about systems, like performance analysis of computer networks, model
checking of time-critical systems, simulation of biological systems,
probabilistic analysis of security and safety properties, etc.

 
Each of these calculi is tailored to a specific quantitative aspect
and for each of them we have to develop a quite complex theory almost
from scratch.  This is a daunting and error-prone task, as it embraces
the definition of syntax, semantics, labelled transition rules,
various behavioural equivalences, logics, proof systems; the proof of
important properties like congruence of behavioural equivalences; the
development of algorithms and tools for simulations, model checking,
etc.
This situation would naturally benefit from general
\emph{frameworks} for LTS with quantitative aspects, i.e.,
mathematical meta-models offering general methodologies, results, and
tools, which can be uniformly instantiated to a wide range of specific
calculi and models.
In recent years, some of these theories have been proposed; we mention
\emph{Functional Transition Systems} (FuTS)  \cite{latella12:futsbisim},
\emph{weighted labelled transition systems} (WLTSs)
\cite{tofts1990:synchronous,ks2013:w-s-gsos}, and in particular
\emph{Uniform Labelled Transition Systems} (ULTraS), introduced by
Bernardo, De Nicola and Loreti specifically as ``a uniform setting for
modelling non-deterministic, probabilistic, stochastic or mixed
processes and their behavioural equivalences''
\cite{denicola13:ultras}.

A common feature of most of these meta-models is that their labelled
transition relations do not yield states (i.e., a process),
but some mathematical object representing quantitative informations
about ``how'' each state can be reached.
In particular, transitions in ULTraS systems have the form $P
\xrightarrowg{a} \rho$ where $\rho$ is a \emph{state reachability
  weight function}, i.e., a function assigning a \emph{weight} to each
possible state.  By suitably choosing the set of weights, and how
these functions combine, we can recover ordinary non-deterministic
LTSs, probabilistic transition systems, stochastic transition systems,
etc.  As convincingly argued in
\cite{denicola13:ultras}, the use of weight
functions in place of plain processes simplifies the combination of
non-determinism with quantitative aspects, like in the case of EMPA or
PEPA.  Moreover, it paves the way for general definitions and results,
an important example being the notion of $\mcl M$-bisimulation
\cite{denicola13:ultras}.

Albeit quite effective, these meta-models are at their dawn, with many
results and techniques still to be developed.  An important example of
these missing notions is a \emph{specification format}, like the
well-known GSOS format for non-deterministic labelled transition
systems.  These formats are very useful in practice, because they can
be used for ensuring important properties of the system; in
particular, the bisimulations induced by systems in these formats is
guaranteed to be a congruence (which is crucial for compositional
reasoning).  From a more foundational point of view, these frameworks
would benefit from a categorical characterization in the theory of
coalgebras and bialgebras: this would allow a cross-fertilizing
exchange of definitions, notions and techniques with similar contexts
and theories.



In this paper, we provide two main contributions in this respect.
First, we present a GSOS-style format, called \emph{Weight Function
  SOS} (WFSOS), for the specifications of non-deterministic systems
with quantitative aspects. The judgement derived by rules in this style
is of the form $P \xrightarrowu{a} \psi$, where $P$ is a process and
$\psi$ is a \emph{weight function term}. These terms describe weight
functions, by means of an \emph{interpretation}.  A specification
given in this format defines a ULTraS (although we could work also in
other frameworks, such as FuTS). By choosing the set of weights, the
language of weight function terms and their interpretation, we can
readily capture many quantitative notions (probabilistic, stochastic,
etc.), and different kinds of non-deterministic interactions, covering
models like PEPA, TIPP, PCSP, EMPA, among others.  Moreover, the WFSOS
format supports a general definition of \emph{(strong) bisimulation},
which can be readily instantiated to the various specific systems.

The second contribution aims to be more fundamental.  We provide a
general categorical presentation of these non-deterministic systems
with quantitative aspects. Namely, we prove that ULTraS systems are in
one-to-one correspondence with coalgebras of a precise class of
functors, parametric on the underlying weight structure.  Using this
characterization we show that each WFSOS specification yields a
\emph{$\lambda$-distributive law} (i.e.~a natural transformation of a
specific shape) for these functors. Thus, taking advantage of
Turi-Plotkin's bialgebraic framework, we can prove that the
bisimulation induced by a WFSOS is always a congruence, thus allowing
for compositional reasoning in quantitative settings.


The developments closest to ours are Klin's \emph{Weighted GSOS}, a
rule format for WLTS \cite{ks2013:w-s-gsos}, and Bartel's \emph{Segala
  GSOS}, a rule format for Segala systems
\cite[§5.3]{bartels04thesis}.  Both Segala systems and WLTS are
subsumed by ULTraS (in fact, WLTSs correspond to ``deterministic''
ULTraSs, where a process is associated to exactly one weight function
for each label), and as we will show, WFSOS subsumes WGSOS and Segala
GSOS.  On a different direction, in \cite{denicola13:ustoc} De Nicola
et al.~provide a ``meta-calculus'' for describing various stochastic
systems and their semantics as FuTS, showing that in several cases
behavioural equivalences are congruences. This interesting
approach is complementary to ours, since it provides some
``syntactic-semantic basic blocks'' to be assembled, instead of a
general rule format.


The rest of the paper is structured as follows.  In
Section~\ref{sec:ultras} we recall Uniform Labelled Transition
Systems, and their bisimulation.  In Section \ref{sec:wfsos} we
introduce the \emph{Weight Function SOS} specification format for the
syntactic presentation of ULTraSs.  In Section~\ref{sec:examples} we
apply this format to PEPA, and show that it subsumes other GSOS
formats like WGSOS and Segala GSOS. The categorical presentation of
ULTraS and WFSOS, with the result that bisimilarity is a congruence,
is given in Section~\ref{sec:coalg}.  Final remarks and directions for
future work are in Section~\ref{sec:concl}.


\section{Uniform Labelled Transition Systems}\label{sec:ultras}
In this Section we recall and elaborate the definition of ULTraSs and
describe their (coalgebraically derived) bisimulation, offering
a comparison with Bernardo's (more general) notion of $\mcl M$-bisimulation
presented in \cite{denicola13:ultras}. 
Although we focus on the specific framework of ULTraS, results and
methodologies described in this paper can be ported to other formats
with similar features (like FuTS  \cite{latella12:futsbisim}),
and more generally to a wide range of systems combining different computational
aspects in different ways.

\vspace{-1ex}
\paragraph{Uniform Labelled Transition Systems} ULTraS are
(non-deterministic) labelled transition systems whose transitions lead
to \emph{state reachability weight functions}, i.e.~functions
representing quantitative informations about ``how'' each state can be
reached. Examples of weight functions include probability
distributions, resource consumption levels, or stochastic rates. Under
this light, ULTraS can be thought as ``generalized'' Segala systems
\cite{sl:njc95} (which stratify non-determinism over probabilism).
Following the parallel with Segala systems, ULTraS transitions can be
pictured as being composed by two steps:
\[
   x \xrightarrowu{a} \rho \xrightarroww{w} y
\]
where the first is a labelled non-deterministic (sub)transition
and the second is a weighted one; from this perspective the weight function 
plays the r\^ole of the ``hidden intermediate state''.

Akin to Weighted Labelled Transition Systems (WLTS)
\cite{ks2013:w-s-gsos,tofts1990:synchronous,handbook:weighted2009}, 
weights are drawn from a fixed
set endowed with a commutative monoid structure, where the unit is
meant to be assigned to disabled transitions (i.e.~those yielding
unreachable states) and the monoidal addition is used to
compositionally weight sets of transitions given by non-determinism.

\begin{definition}[ULTraS]
\label{def:ultras}
  Given a commutative monoid $\mfk W = (W,+,0)$, a \emph{($\mfk W$-weighted) Uniform   
  Labelled Transition System} ($\mfk W$-ULTraS) is a triple $(X,A,\rightarrowu)$ where:
  \begin{itemize}
     \item $X$ is a set of \emph{states} (processes);
     \item $A$ is a set of \emph{labels} (actions);
     \item ${\rightarrowu} \subseteq X\times A \times [X \to W]$ is a \emph{transition
     relation} and $[X \to W]$ is the set of \emph{weight functions}.
  \end{itemize}
\end{definition}
Monoidal addition  does not play any r\^ole  in the above definition\footnote{In \cite{denicola13:ultras} weights are assumed to be a pointed set.
Actually, a partial order is assumed, but the ordering is not crucial
to the basic notion of ULTraS as it is only required by some equivalences
considered in that paper.}
but it is crucial to the notion of bisimulation by uniformly
providing a compositional way to weight sets of outgoing transitions
(e.g.~stochastic or probabilistic bisimulations).
Because total weights are defined by summation, some guarantees
on the cardinality of these sets are needed:
\begin{definition}[Image boundedness]
Let $\kappa$ and $\kappa'$ be ordinals and 
let $\mfk W = (W,+,0)$ be a commutative monoid. 
A weight function $\rho$ is \emph{$\kappa$-supported} iff 
$|\{x \in \mathrm{dom}(\rho) \mid \rho(x) \neq 0\}| < \kappa$. 
A $\mfk W$-ULTraS $(X,A,\rightarrowu)$ is \emph{image $(\kappa,\kappa')$-bounded} iff 
for any state $x \in X$ and label $a \in A$ the set
$\{\rho \mid x\xrightarrowu{a}\rho\}$ has cardinality less than
$\kappa$ and contains only $\kappa'$-supported functions.
(We shall drop $\kappa'$ when it is equal to $\kappa$.)
\end{definition}
The notion of image boundedness just introduced guarantees that the
branching transitions do not exceed the expressive power of summation
of the underlying monoid in the sense that, if sum is defined for any
family of cardinality lesser than $\kappa$, then for any state in a
$\kappa$-bounded system the total weights for sets of outgoing
transitions are always defined. Henceforth, for the sake of
simplicity, we will restrict ourselves to image finite systems
(i.e.~$\omega$-bounded), but the development can be generalized
throughout.

\begin{notation} Let $\mfk W = (W,+,0)$ be a commutative monoid.
For each function $\rho : X \to W$ the set $\{ x \mid \rho(x) \neq 0\}$
is called \emph{support of $\rho$} (written $\support{\rho}$).
For any set $X$ let $\pf X$ denote the finite powerset of $X$ and let $\fw X$ be the set $\{\rho : X \to W \mid \support{\rho} \in \pf X\}$ of \emph{finitely supported weight functions}.
For each $\rho \in \fw X$ let $\rho(Y) = \sum_{x\in Y}\rho(x)$, 
for any $Y \subseteq X$ and define the \emph{total weight of $\rho$} as
$\totalweight{\rho}\defeq\rho(X) = \rho(\support{\rho})$.
\end{notation}

We give now the definition of bisimulation for ULTraS, which arises
directly from the coalgebraic understanding of ULTraS which we will introduce in
Section~\ref{sec:coalg}.
\begin{definition}[Bisimulation]
  \label{def:bisim}
  Let $(X,A,\rightarrowu_X)$ and $(Y,A,\rightarrowu_Y)$ be two image finite 
  $\mfk W$-ULTraS. An equivalence relation $R$ between $X$ and $Y$  
  is a \emph{bisimulation} if, and only if,
  for each pair of states $x \in X$ and $y \in Y$, 
  $(x,y) \in R$ implies that for each label $a \in A$ 
  the following holds:
  \begin{itemize}
    \item
    if ${x\xrightarrowu{a}_{\!\!X}\varphi}$ then there exists 
    ${y\xrightarrowu{a}_{\!\!Y}\psi}$ s.t.~for each equivalence class $C$ of $R$: 
    $\varphi(C) = \psi(C)$;
    \item
    if ${y\xrightarrowu{a}_{\!\!Y}\psi}$ then there exists
    ${x\xrightarrowu{a}_{\!\!X}\varphi}$ s.t.~for each equivalence class $C$ of $R$:
    $\psi(C) = \varphi(C)$.
  \end{itemize}
  Processes $x$ and $y$ are said to be \emph{bisimilar} 
  if there exists a bisimulation $R$ such that $(x,y) \in R$.
\end{definition}

Likewise ULTraSs can be though as stacking non-determinism over other
computational behaviour, Definition~\ref{def:bisim} stratifies
bisimulation for non-deterministic labelled transition system over
bisimulation for systems expressible as labelled transition systems
weighted over commutative monoids such as stochastic or probabilistic
systems. In fact, two processes $x$ and $y$ are related by some
bisimulation if, and only if, whether one reaches a weight function
via a non-deterministic labelled transition, the other can reach an
equivalent function via a transition with the same label where the
reached functions are equivalent in the sense that they assign the
same total weight to the same classes of states in the relation. For
instance, in the case of weights being probabilities, functions are
considered equivalent only when they agree on the probabilities
assigned to each class of states which is precisely the intuition
behind probabilistic bisimulation \cite{ls:probbisim}.
More examples will be discussed below.

\vspace{-1ex}
\paragraph{Constrained ULTraS}
Sometimes, the ULTraSs given by some monoid are too general; for
instance, fully-stochastic systems such as (labelled) CTMCs are a
strict subclass of ULTraSs weighted over the monoid of non-negative
real numbers $(\mbb R_0^+,+,0)$, where weights express rates of
exponentially distributed continuous time transitions. In the case of
fully-stochastic systems, for each label, each state is associated with
precisely one weight function. This kind of ``deterministic'' ULTraSs
are called \emph{functional} in
\cite{denicola13:ultras}, since the transition
relation is functional, and correspond precisely to WLTSs
(cf.~\cite{tofts1990:synchronous,ks2013:w-s-gsos,handbook:weighted2009}). These are a
well-known family of systems (especially their automata counterpart)
and have an established coalgebraic understanding as long as a
(coalgebraically derived) notion of \emph{weighted bisimulation} which
are shown to subsume several known kinds of systems such as
non-deterministic, (fully) stochastic, generative and reactive probabilistic
\cite{ks2013:w-s-gsos}. Moreover, Definition~\ref{def:bisim} coincides with
weighted bisimulation \cite[Def.~4]{ks2013:w-s-gsos} on functional
ULTraSs/WLTSs over the same monoid and hence Definition~\ref{def:bisim} 
covers every system expressible in the framework of WLTS.
\begin{proposition}
  \label{prop:w-bisim}
  Let $\mfk W$ be a commutative monoid and $(X,A,\rightarrowu)$ 
  be a $\mfk W$-LTS seen as a functional ULTraS on $\mfk W$.
  Every bisimulation relation on $\rightarrowu$ is a $\mfk W$-weighted bisimulation
  and vice versa.
\end{proposition}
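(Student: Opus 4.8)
The plan is to observe that, once the ULTraS is functional, \emph{both} notions unfold to exactly the same elementary condition, so the equivalence is essentially definitional; the only genuine work is tracking how functionality collapses the existential quantifiers of Definition~\ref{def:bisim} and checking that the two definitional presentations line up.

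First I would record the consequence of functionality: for each state $x$ and label $a$ there is a \emph{unique} weight function, say $\rho^a_x \in \fw X$, with $x \xrightarrowu{a} \rho^a_x$. Reading $R$ as an equivalence on $X$ (i.e.\ taking $Y = X$ in Definition~\ref{def:bisim}), the premise $x \xrightarrowu{a} \varphi$ then forces $\varphi = \rho^a_x$, and any witness $\psi$ with $y \xrightarrowu{a} \psi$ is forced to be $\rho^a_y$. Hence the two symmetric clauses of Definition~\ref{def:bisim} merge into the single requirement that for every $(x,y) \in R$, every label $a$, and every equivalence class $C$ of $R$ one has $\rho^a_x(C) = \rho^a_y(C)$, where $\rho(C) = \sum_{z \in C}\rho(z)$ as in the Notation. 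Image finiteness guarantees each such support is finite, so these cumulative weights are well-defined elements of $\mfk W$.

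Next I would recall the concrete characterization of $\mfk W$-weighted bisimulation from \cite[Def.~4]{ks2013:w-s-gsos}: an equivalence relation $R$ on the carrier of a $\mfk W$-WLTS is a weighted bisimulation precisely when related states assign equal cumulative weight to each class, i.e.\ $\rho^a_x(C) = \rho^a_y(C)$ for all $(x,y)\in R$, all $a$, and all classes $C$. Since a functional ULTraS on $\mfk W$ \emph{is} a $\mfk W$-WLTS carrying the same transition data (the unique $\rho^a_x$), this is verbatim the condition distilled in the previous step, and both implications become immediate. For bisimulation $\Rightarrow$ weighted bisimulation, I instantiate the first clause with $\varphi = \rho^a_x$; the forced witness $\psi = \rho^a_y$ yields $\rho^a_x(C) = \rho^a_y(C)$ for all $C$. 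Conversely, given a weighted bisimulation $R$ and a transition $x \xrightarrowu{a}\varphi$, functionality gives $\varphi = \rho^a_x$ and supplies $y \xrightarrowu{a}\rho^a_y$ as the required witness, while the weighted-bisimulation condition delivers $\varphi(C) = \rho^a_y(C)$ for every $C$; the symmetric clause follows by symmetry of the condition.

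I expect the delicate point to be bookkeeping rather than mathematics. Concretely, I must verify that ``functional'' indeed means \emph{exactly one} (total) weight function per state--label pair, so that the existential witnesses in Definition~\ref{def:bisim} always exist and are unique and the collapse above is legitimate; and I must confirm that Klin and Sassone's definition agrees with the cumulative-weight characterization I use, since if their weighted bisimulation is phrased via a coalgebra-morphism kernel or a relation lifting rather than directly in terms of the sums $\rho^a_x(C)$, a short unfolding is needed to recover the elementary form. Both are standard facts about WLTS, but they are exactly what make the two existential quantifiers collapse cleanly and let the two conditions coincide on the nose.
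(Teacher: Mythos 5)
Your proof is correct and follows exactly the argument the paper intends (the paper states Proposition~\ref{prop:w-bisim} without an explicit proof, asserting only that Definition~\ref{def:bisim} ``coincides with weighted bisimulation on functional ULTraSs/WLTSs''): functionality collapses the two existential clauses of Definition~\ref{def:bisim} to the single condition $\rho^a_x(C)=\rho^a_y(C)$ on classes, which is verbatim Klin's elementary characterization of $\mfk W$-weighted bisimulation. The bookkeeping points you flag (uniqueness of the witness per state--label pair, finite supports making the class sums well-defined) are exactly the right ones and are handled correctly.
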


\looseness=-1
Another constraint arises in the case of probabilistic
systems, i.e., weight functions are probability distribution. 
Since addition is not a closed operation in the unit interval $[0,1]$, 
there is no monoid $\mfk W$ such that every weight
function on it is also a probability distributions.
We could relax Definition~\ref{def:ultras} 
to allow commutative \emph{partial} monoids\footnote{ A
  commutative partial monoid is a set endowed with a partial binary
  operation with a unit such that it is associative and commutative,
  where it is defined, and always defined on its unit. } such as
the weight structure of probabilities $([0,1],+,0)$). Unfortunately,
not every weight function on $[0,1]$ is a probability distribution.
In fact, probabilistic systems (among others) can be recovered 
as ULTraSs over the $(\mbb R^+_0,+,0)$ (i.e.~the free completion of 
$([0,1],+,0)$) and subject to suitable constraints. 
For instance, Segala systems \cite{sl:njc95} are
precisely the strict subclass of $\mbb R^+_0$-ULTraS such that
every weight function $\rho$ in their transition relation is a 
probability distribution i.e.~$\totalweight\rho = 1$.
Moreover, bisimulation is preserved by constraints.
For instance, bisimulations on the above class of (constrained) ULTraS
are Segala's (strong) bisimulations (cf.~\cite[Def.~13]{sl:njc95}), and vice versa.
\begin{proposition}\label{prop:segala-bisim}
  Let $(X,A,\rightarrowu)$ be an image finite Segala-system seen as a 
  ULTraS on $(\mbb R^+_0,+,0)$.
  Every bisimulation relation on $\rightarrowu$ is a strong bisimulation
  in the sense of \cite[Def.~13]{sl:njc95} and vice versa.
\end{proposition}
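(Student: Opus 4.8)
The plan is to show that, once a Segala system is presented as a constrained ULTraS, the transfer clauses of Definition~\ref{def:bisim} and those of Segala's strong bisimulation \cite[Def.~13]{sl:njc95} are literally the same requirement, so that the two classes of relations coincide. I would proceed by first fixing the translation and then unfolding both definitions on it.

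First I would recall the dictionary between the two presentations. A Segala system has transitions $x \xrightarrowu{a} \mu$ with $\mu$ a probability distribution over states; viewing it as a ULTraS over $(\mbb R^+_0,+,0)$ amounts to reading each such $\mu$ as a weight function $\rho\colon X \to \mbb R^+_0$ with $\totalweight{\rho} = \rho(X) = 1$, which is exactly the defining constraint of the subclass. Because the system is image finite, every such $\rho$ is finitely supported, so for any (possibly infinite) $R$-equivalence class $C$ the sum $\rho(C)=\sum_{z\in C}\rho(z)$ has only finitely many nonzero summands and is well defined. Thus the weight functions occurring in the transition relation are precisely Segala's target distributions, and the quantity $\rho(C)$ is precisely the probability that $\rho$ assigns to the class $C$.

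Next I would compare the two matching conditions. In \cite[Def.~13]{sl:njc95}, an equivalence relation $R$ on $X$ is a strong bisimulation exactly when, for $(x,y)\in R$ and each label $a$, every transition $x\xrightarrowu{a}\mu$ is matched by some $y\xrightarrowu{a}\nu$ with $\mu\equiv_R\nu$, and symmetrically; here $\mu\equiv_R\nu$ is the lifting of $R$ to distributions, which for an equivalence relation holds exactly when $\mu(C)=\nu(C)$ for every class $C\in X/R$. Under the dictionary above this is verbatim the clause of Definition~\ref{def:bisim}, namely that $x\xrightarrowu{a}\varphi$ be matched by some $y\xrightarrowu{a}\psi$ with $\varphi(C)=\psi(C)$ for every class $C$, and note that any such matching partner is automatically a distribution because the whole transition relation is constrained. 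Since both definitions quantify over the same equivalence relations, states, labels and distribution-valued transitions, a relation satisfies the ULTraS clauses if and only if it satisfies Segala's clauses; the ``vice versa'' direction requires no separate argument, as the two symmetric halves coincide term by term.

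The only genuinely delicate point is the identification $\mu\equiv_R\nu \iff \forall C\in X/R.\ \mu(C)=\nu(C)$. If one reads Segala's lifting in its coupling (weight-function) form rather than class by class, one must invoke the standard fact that for an \emph{equivalence} relation $R$ the two formulations agree, via a routine max-flow / Hall-type argument. Since $R$ is assumed to be an equivalence relation throughout Definition~\ref{def:bisim}, this identification is available, and with it the two notions collapse onto each other, which establishes the proposition.
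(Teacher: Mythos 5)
Your argument is correct and is exactly the intended one: the paper itself gives no explicit proof of this proposition, and the natural justification is precisely your unfolding of both transfer clauses under the dictionary ``probability distribution = weight function of total weight $1$'', together with the observation that Segala--Lynch's lifting $\equiv_R$ for an \emph{equivalence} relation is defined class-by-class, so the identification you flag as the only delicate point is essentially definitional rather than requiring a coupling/max-flow argument. Your remarks on image finiteness (well-definedness of $\rho(C)$) and on the matching partner automatically being a distribution are the right supporting details; nothing is missing.
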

A similar result holds for reactive and generative (or fully)
probabilistic systems and their bisimulations.
In fact, these are functional ULTraS with weight functions in
$\f{\mbb R^0_+}$ and subject to constraints
 $\forall x \ \sum_{a\in A,y\in X}\mrm P(x,a,y) \in \{0,1\}$
and
 $\forall x \in X,a \in A\  \sum_{y\in X}\rho(x,a,y) \in \{0,1\}$
respectively.

\vspace{-1ex}
\paragraph{About $\mcl M$-bisimulation}
Bernardo et al.~defined a notion of bisimulation for ULTraS
\cite[Def.~3.3]{denicola13:ultras}, parametrized by a function $\cal
M$ which is used to weight sets of transitions.  Notably, $\cal M$'s
codomain may be not the same of that used for weight functions in the
transition relation. This offers an extra degree of freedom with
respect to Definition~\ref{def:bisim}. We recall the relevant
definitions.\footnote{In the original presentation, $\mcl M$ is
  required to consistently weight also sequences of transitions, in
  order to cover also trace equivalence; because this Section focuses
  on bisimulations only, this information will be omitted.}

\begin{definition}[$M$-function]
  \label{def:mfun}
  Let $(M,\bot)$ be a pointed set and $(X,A,\rightarrowu)$ be a 
  $\mfk W$-ULTraS.
  A function $\mcl M : X \times A\times \mcl P X \to M$ is an
  \emph{$M$-function for $(X,A,\rightarrowu)$} iff
  \begin{itemize}
    \item
      for all $x \in X$, $a\in A$ and $C \in \mcl PX$, 
      $\mcl M(x,a,C) = \bot$ whenever 
      $\rho(C) = 0$ for every $x \xrightarrowu{a}\rho$
      or there is no $\rho$ at all;
    \item
      for all $x,y \in X$, $a \in A$ and  $C_1,C_2 \in \mcl P X$,
      $\mcl{M}(x,a,C_1) = \mcl{M}(y,a,C_1)$ and $\mcl{M}(x,a,C_2) = \mcl{M}(y,a,C_2)$ implies that
      $\mcl{M}(x,a,C_1\cup C_2) = \mcl{M}(y,a,C_1 \cup C_2)$.
  \end{itemize}
\end{definition}

\begin{definition}[$\mcl M$-bisimulation {\cite{denicola13:ultras}}]
  \label{def:mbisim}
  Let $\mcl M$ be an $M$-function for $(X,A,\rightarrowu)$.
  A relation $R \subseteq X\times X$ is a \emph{$\cal M$-bisimulation for $\rightarrowu$} 
  iff for each pair $(x,y)\in R$, label $a \in A$, and class $C \in (X/R)$ 
  the following holds:
  \[
  \mcl{M}(x,a,C) = \mcl{M}(y,a,C)
  \text.\]
\end{definition}
The notion of $\mcl M$-bisimulation is somehow more general than
Definition~\ref{def:bisim} since (sets of) transitions are not
necessarily weighted in the same structures. For instance, stochastic
rates can be considered up-to a suitable tolerance in order to account
for experimental measurement errors in the model.

A further distinction arises from the fact that ULTraSs came with two
distinct way of ``terminating''. A state can be seen as ``terminated''
either when its outgoing transitions are always the constantly zero
function, or when it has no transitions at all. In the first case, the
state has still associated an outcome, saying that no further state is
reachable; we call these states \emph{terminal}. In the second case,
the LTS does not even tell us that the state cannot reach any further
state; in fact, there is no ``meaning'' associated to the state. In
this case, we say that the state is \emph{stuck}.\footnote{To make a
  comparison with sequential programs, a terminal state is when the
  execution reaches the end of the program; a stuck state is when we
  are at executing an instruction whose meaning is not defined.} The
bisimulation given in Definition~\ref{def:bisim} keeps these two
terminations as different (i.e., they are not bisimilar), whereas
$\mcl M$-bisimulation does not make this distinction
(cf.~\cite[Def.~3.2]{denicola13:ultras} or, for a concrete example
based on Segala systems, \cite[Def.~7.2]{denicola13:ultras}).
Finally, the two notions differ on the quantification over 
equivalence classes: in the case of Definition~\ref{def:bisim} 
quantification depends on the non-deterministic step whereas
in the case of $\mcl M$-bisimulation it does not.

Under some mild assumptions, the two notions agree.  In particular,
let us restrict to systems with just one of the two terminations
for each action $a$---i.e.~for pair of states $x,y$, 
$\{\rho\mid x\xrightarrowu{a}\rho\} = \emptyset$
implies
$\lambda z. 0 \notin \{\rho\mid y\xrightarrowu{a}\rho\}$,
and, symmetrically,
$\lambda z. 0 \in \{\rho\mid x\xrightarrowu{a}\rho\}$
implies
$\{\rho\mid y\xrightarrowu{a}\rho\} \neq \emptyset$.
Then, for suitable choices of $\mcl M$, 
$\mcl M$-bisimulation agrees with the notion
given in Definition~\ref{def:bisim} (on one system).
\begin{proposition}\label{prop:vs-m-bisim}
  Let $(X,A,\rightarrowu)$ be a $\mfk W$-ULTraS such that
  for any label, each pair of states has at most one kind of termination.
  Every bisimulation $R$ is also an $\mcl M$-bisimulation for
  $\mcl{M}(x,a,C)\!=\!\{[\rho]_{\equiv_R} \mid {x\xrightarrowu{a}\rho}
   \land \rho(C) \neq 0\}\cup \varnothing$
  where $(M,\bot)=(\pf(\fw X/{\equiv_R}),\varnothing)$,
  $\varphi \equiv_R \psi\defiff \forall C\in X/R.\varphi(C) = \psi(C)$, 
  and $\varnothing=\{[\lambda z. 0]_{\equiv_R}\}$.
\end{proposition}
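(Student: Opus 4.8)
The plan is to establish two things, after which Definition~\ref{def:mbisim} applies automatically: first, that the proposed $\mcl M$ genuinely is an $M$-function in the sense of Definition~\ref{def:mfun}; and second, that every relation $R$ which is a bisimulation in the sense of Definition~\ref{def:bisim} satisfies the defining equality of an $\mcl M$-bisimulation. It is convenient to write $N_x(a,C) = \{[\rho]_{\equiv_R} \mid x \xrightarrowu{a} \rho \land \rho(C) \neq 0\}$, so that $\mcl M(x,a,C) = N_x(a,C) \cup \varnothing$ with $\varnothing = \{[\lambda z.0]_{\equiv_R}\} = \bot$. Since we work with image finite systems, each set $\{\rho \mid x \xrightarrowu{a}\rho\}$ is finite, hence so is $N_x(a,C)$; thus $\mcl M(x,a,C) \in \pf(\fw X/{\equiv_R}) = M$ and $\mcl M$ is at least well-typed.

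For the first part, the bottom clause of Definition~\ref{def:mfun} is immediate: if there is no $\rho$ with $x \xrightarrowu{a}\rho$, or if $\rho(C)=0$ for every such $\rho$, then $N_x(a,C)$ is empty and so $\mcl M(x,a,C)=\bot$. The one-termination hypothesis is what makes this coherent: it forbids a \emph{stuck} and a \emph{terminal} state from coexisting for the same label, so the two ways of ``reaching nothing on $C$'' collapse onto the single bottom $\{[\lambda z.0]_{\equiv_R}\}$, reflecting that $\mcl M$-bisimulation (unlike Definition~\ref{def:bisim}) does not separate stuck from terminal states. The additivity clause is the real work. Granting the set identity $N_x(a,C_1\cup C_2) = N_x(a,C_1)\cup N_x(a,C_2)$ (and likewise for $y$), it follows by distributing the union that $\mcl M(x,a,C_1\cup C_2) = N_x(a,C_1)\cup N_x(a,C_2)\cup\varnothing = \mcl M(x,a,C_1)\cup\mcl M(x,a,C_2)$, and the same for $y$; so $\mcl M(x,a,C_i)=\mcl M(y,a,C_i)$ for $i=1,2$ yields the required equality on $C_1\cup C_2$.

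That set identity rests on a single property of $\mfk W$: a finite sum $\rho(C)=\sum_{z\in C}\rho(z)$ is $0$ exactly when every summand is $0$ (the unit $0$ being reserved for disabled transitions). Under this property $\rho(C_1\cup C_2)\neq 0$ holds iff $\rho(C_1)\neq 0$ or $\rho(C_2)\neq 0$, from which $N_x(a,C_1\cup C_2)=N_x(a,C_1)\cup N_x(a,C_2)$ follows by unfolding $N$. For the second part I read Definition~\ref{def:bisim} directly: for $(x,y)\in R$, the condition ``$x \xrightarrowu{a}\varphi$ implies some $y \xrightarrowu{a}\psi$ with $\varphi(C)=\psi(C)$ on every class $C$'' is precisely ``$\varphi\equiv_R\psi$'', so the two symmetric clauses give $\{[\varphi]_{\equiv_R}\mid x\xrightarrowu{a}\varphi\} = \{[\psi]_{\equiv_R}\mid y\xrightarrowu{a}\psi\}$. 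Now for $C\in X/R$ the predicate $\rho(C)\neq 0$ depends only on $[\rho]_{\equiv_R}$, since $\equiv_R$-related functions agree on every $R$-class; hence $N_x(a,C)$ and $N_y(a,C)$ are carved out of the \emph{same} set of classes by the \emph{same} predicate, so $N_x(a,C)=N_y(a,C)$ and, adding $\bot$, $\mcl M(x,a,C)=\mcl M(y,a,C)$.

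The step I expect to be the main obstacle is the additivity clause of the $M$-function, i.e.\ the identity $N_\bullet(a,C_1\cup C_2)=N_\bullet(a,C_1)\cup N_\bullet(a,C_2)$. This is the only point where the monoid structure genuinely intervenes, and care is needed because $C_1,C_2$ range over arbitrary subsets and need not be unions of $R$-classes; one therefore cannot argue class-wise as in the second part, but must reason with the summation $\rho(C)=\sum_{z\in C}\rho(z)$ directly, and it is exactly the positivity of $\mfk W$ (no nontrivial sum collapsing to $0$) that prevents cancellation from breaking the identity.
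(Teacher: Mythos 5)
Your second half is sound and needs nothing beyond the stated hypotheses: the two clauses of Definition~\ref{def:bisim} give exactly $\{[\varphi]_{\equiv_R}\mid x\xrightarrowu{a}\varphi\}=\{[\psi]_{\equiv_R}\mid y\xrightarrowu{a}\psi\}$, and since $C$ ranges over $R$-classes in Definition~\ref{def:mbisim}, the predicate $\rho(C)\neq 0$ is $\equiv_R$-invariant, so the two comprehensions defining $N_x(a,C)$ and $N_y(a,C)$ coincide. That part is the intended content of the proposition and your argument for it is correct. Your reading of the one-termination hypothesis is also defensible for this direction: a bisimulation in the sense of Definition~\ref{def:bisim} can never relate a stuck state to a terminal one, so the hypothesis does no work here (it matters for the converse direction discussed in the surrounding text).

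The gap is exactly where you predicted it, and it is not closed. Your verification of the additivity clause of Definition~\ref{def:mfun} rests on the claim that a finite sum in $\mfk W$ is $0$ only when every summand is $0$ --- that is, on $\mfk W$ being zerosumfree. The proposition is stated for an arbitrary commutative monoid, and the paper treats zerosumfreeness as an \emph{extra} property imposed only on certain WFSOS rules (Definition~\ref{def:wfsos-rule}), not on weight monoids in general; ``$0$ is reserved for disabled transitions'' is an informal reading, not an algebraic fact. For a cancellative monoid the step fails, and so does the additivity clause itself for this $\mcl M$: take $\mfk W=(\mbb Z,+,0)$, states $x,y,u,v,w$, give every state an $a$-transition to $\lambda z.0$ (so no state is stuck and the one-termination hypothesis holds), add $x\xrightarrowu{a}\rho$ with $\rho(u)=1$, $\rho(v)=-1$, $\rho(w)=1$ and $\rho=0$ elsewhere, and let $R$ be the identity bisimulation, $C_1=\{u,v\}$, $C_2=\{v,w\}$. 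Then $\rho(C_1)=\rho(C_2)=0$, so $\mcl M(x,a,C_i)=\varnothing=\mcl M(y,a,C_i)$ for $i=1,2$, while $\rho(C_1\cup C_2)=1\neq 0$ gives $\mcl M(x,a,C_1\cup C_2)=\{[\rho]_{\equiv_R}\}\cup\varnothing\neq\varnothing=\mcl M(y,a,C_1\cup C_2)$. So either the proposition carries an implicit positivity assumption on $\mfk W$ (true of every monoid used in the paper's examples), or the $M$-function property must be argued differently; as written, your proof silently adds a hypothesis, and you should state it explicitly rather than derive it from the intended reading of $0$.
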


\vspace{-1ex}
\section{WFSOS: A GSOS format for ULTraSs}\label{sec:wfsos}
In this section we introduce the \emph{Weight Function SOS} 
specification format for the syntactic presentation of ULTraSs.  As it
will be proven in Section~\ref{sec:cong-proof}, bisimilarity for
systems given in this format is guaranteed to be a congruence with
respect to the signature used for representing processes.

The format is parametric in the weight monoid $\mfk W$ and, as usual,
in the \emph{process signature} $\Sigma$ defining the syntax of system
processes.  In contrast with ``classic'' GSOS formats
\cite{klin:tcs11}, targets of rules are not processes but terms whose
syntax is given by a different signature, called the \emph{weight
  signature}.  This syntax can be thought as an ``intermediate
language'' for representing weight functions along the line of viewing
ULTraSs as stratified (or staged) systems. An early example of this
approach can be found in \cite{bm:cmcs12}, where targets are terms
representing measures over the continuous state space.
Earlier steps in this direction can be found e.g.~in Bartels' GSOS
format for Segala systems (cf.~Section~\ref{sec:examples} and
\cite[§5.3]{bartels04thesis}) or in
\cite{cm:quest10,denicola13:ultras} where targets are presented by
meta-expressions.

\begin{definition}[WFSOS Rule]\label{def:wfsos-rule}
Let $\mfk W$ be a commutative monoid and $A$ a set of labels. 
Let $\Sigma$ and $\Theta$ be the \emph{process signature} and 
the \emph{weight signature} respectively.
A WFSOS rule over them is a rule of the form: 
\vspace{-.5ex}\[\vspace{-1ex}
\frac{
\left\{
x_i \xrightarrowu{a} \varphi^a_{ij}
\right\}
\hspace{-1.2ex}\begin{array}{l}
\scriptstyle  \\[-4pt]
\scriptstyle  1 \leq i \leq n,\ a \in A_i,\\[-4pt]
\scriptstyle  1\leq j \leq m^a_i
\end{array}
\quad 
\left\{
x_i \centernot{\xrightarrowu{b}}
\right\}
\hspace{-1.2ex}\begin{array}{l}
\scriptstyle  \\[-4pt]
\scriptstyle  1 \leq i \leq n,\\[-4pt]
\scriptstyle  b \in B_i
\end{array}
\quad 
\left\{
\totalweightg{\varphi^{a_k}_{i_kj_k}} = w_k
\right\}
_{1 \leq k \leq p}
\quad 
\left\{
\supportg{\varphi^{a_k}_{i_kj_k}}\ni y_k
\right\}
_{1 \leq k \leq q}
}{\mathtt{f}(x_1,\dots,x_n) \xrightarrowu{c} \psi}
\]
where:
\begin{itemize}\itemsep=0pt
\item 
  $\mtt f$ is an $n$-ary symbol from $\Sigma$;
\item 
  $X = \{x_i\mid 1 \leq i \leq n\}$, 
  $Y = \{y_k\mid 1 \leq k \leq q\}$ 
  are sets of pairwise distinct \emph{process} variables;
\item 
  $\Phi = \{\varphi^a_{ij}\mid 1 \leq i \leq n,\ a \in A_i,\ 1\leq j \leq m^a_i\}$
  is a set of pairwise distinct  \emph{weight function} variables;
\item
  $\{w_k \in \mfk W \mid 1 \leq k \leq p\}$ are 
  \emph{weight constants}
  s.t.~$w_k \neq 0$ for $1 \leq k \leq q$;
\item 
  $a,b,c \in A$ are labels and $A_i \cap B_i = \emptyset$ for  
  $1 \leq i \leq n$;
\item
  $\psi$ is a \emph{weight term} for the signature $\Theta$ 
  such that $\mrm{var}(\psi) \subseteq X \cup Y \cup \Phi$;
\item
  if $q > 0$ then $\mfk W$ must have the \emph{zerosumfree} property 
(i.e.~$w + w' = 0 \implies w = w' = 0$).
\end{itemize}
A rule like above is \emph{triggered} by a tuple $\langle C_1,\dots,C_n\rangle$
of \emph{enabled labels} and by a tuple of weights
$\langle v_1,\dots,v_p \rangle$ if, and only if, $A_i = C_i \subseteq A$ 
and $w_k = v_k$ for $1 \leq i \leq n$ and $1 \leq k \leq q$.
\end{definition}
Intuitively, the four families of premises can be grouped in two
kinds: the first two families correspond to the non-deterministic (and
labelled) behaviour, whereas the other two correspond to the weighting
behaviour of quantitative aspects.  The former are precisely the
premises of GSOS rules for LTSs (up-to targets being functions); and
describe the possibility to perform some labelled transitions.  The
latter are inspired by Bartels' \emph{Segala GSOS}
\cite[§5.3]{bartels04thesis} and Klin's WGSOS \cite{ks2013:w-s-gsos} formats;
a premise like $\totalweight{\varphi} = w$ constrains the variable
$\varphi$ to those functions whose total weight is exactly the constant $w$;
a premise like $\support{\varphi} \ni y$ binds the process variable
$y$ to those elements being assigned a non-zero weight\footnote{
Actually, premises like $\support{\varphi} \ni y$ can bind only
elements assigned a weight $u$ s.t.~$u + v \neq 0$ for any $v$, 
since the action of $\fw$ on substitutions \eqref{eq:fw-action}
can annihilate variables assigned to weights with an inverse;
a conservative solution is to allow these premises only
in presence of monoids with the zerosumfree property.
} (i.e.~reachable); clearly, premises of the latter kind assume that
the support of $\varphi$ is not empty and hence $w_k$ is
forced to be non-zero for every $1\leq k \leq q$.
Likewise Segala GSOS (but not WGSOS), there are no
variables denoting the weight of $y_k$, but this information can be
readily extracted from $\varphi^{a_k}_{i_k j_k}$, e.g.~by some
operator from $\Theta$.

Targets of transitions defined by these rules are terms generated from
the signature $\Theta$.  In order to characterize transition relations
for ULTraSs, we need to \emph{evaluate} these terms to weight
functions.  This is obtained by adding an \emph{interpretation for
  weight terms}, besides a set of rules in the above format.

Before defining interpretations and specifications, we need to
introduce some notations.  For a signature $S$ and a set $X$ of
variable symbols, let $T^S X$ denote the set of terms freely
generated by $S$ over the variables $X$ (in the
following, $S$ will be either $\Sigma$ or $\Theta$).  A substitution
for symbols in $X$ is any function $\sigma:X \to Y$;
its action extends to terms defining the function $T^S(\sigma) :
T^S X \to T^S Y$ (i.e.~simultaneous substitution). When confusion
seems unlikely we use the more evocative $\mtt t[\sigma]$ instead of
$T^S(\sigma)(\mtt t)$.

A variable substitution $\sigma : X\to Y$ induces also a function
$\fw(\sigma) : \fw X \to \fw Y$, mapping (finitely supported) weight functions over
$X$ to (finitely supported) weight functions over $Y$, as follows:
\begin{equation}\label{eq:fw-action}
  \fw(\sigma)(\rho) \defeq \lambda y:Y.
  \sum_{x \in \sigma^{-1}(y)}\rho(x)
\end{equation}
where $\sigma^{-1}$ is the pre-image through the map $\sigma$.
Consistently, we denote the action of $\sigma$ on $\rho$
by $\rho[\sigma]$.

\begin{definition}[Interpretation]\label{def:wfsos-eval}
Let $\mfk W$ be a commutative monoid, $\Sigma$ and $\Theta$ be 
the process and the weight signature respectively. A
\emph{weight term interpretation}
for them is a family of functions 
\[
  \llparenthesis\hole\rrparenthesis_X : T^\Theta(X + \fw(X)) \to \fw
  T^\Sigma(X)
\]
indexed over sets of variable symbols, and respecting substitutions, i.e.:
\[
\forall \sigma : X \to Y, \psi\in T^\Theta(X):
\llparenthesis\psi\rrparenthesis_X[\sigma] = 
\llparenthesis \psi[\sigma] \rrparenthesis_Y
\]
\end{definition}

We are ready to introduce the WFSOS specification format. 
Basically, this is a set of WFSOS rules,
subject to some finiteness conditions to ensure image finiteness,
together with an interpretation. 
\begin{definition}[WFSOS specification]\label{def:wfsos-spec}
  Let $\mfk W$ be a commutative monoid, $A$ a set of labels, $\Sigma$
  and $\Theta$ be the process and the weight signature respectively.
  An \emph{image-finite WFSOS specification over $\mfk W, \Sigma,
    \Theta$} is a pair $\langle\mcl R,
  \llparenthesis\hole\rrparenthesis\rangle$ where
  $\llparenthesis\hole\rrparenthesis$ is a weight term interpretation
  and $\mcl R$ is a set of rules compliant with
  Definition~\ref{def:wfsos-rule} and such that only finitely many
  rules share the same operator in the source ($\mtt f$), the same
  label in the conclusion ($c$), and the same trigger $\langle
  A_1,\dots,A_n,w_1,\dots,w_p\rangle$.
\end{definition}

We can now describe how an ULTraS is obtained from a WFSOS specification.
\begin{definition}[Induced ULTraS]\label{def:induced-ultras}
  The ULTraS induced by an image-finite WFSOS specification
  $\langle\mcl R,\llparenthesis\hole\rrparenthesis\rangle$ over $\mfk
  W, \Sigma, \Theta$ is the $\mfk W$-ULTraS $(T^\Sigma\emptyset, A,
  \rightarrowu)$ where $\rightarrowu$ is defined as the smallest subset
  of $T^\Sigma\emptyset
  \times A \times \fw T^\Sigma\emptyset$ being closed under the following condition.

   Let $p = \mtt{f}(p_1, \dots, p_n) \in T^\Sigma\emptyset$. Since the ground 
   $\Sigma$-terms $p_i$ are structurally smaller than $p$ assume 
   that the set $\{\rho \mid p_i \xrightarrowu{a} \rho\}$ --
   and hence the trigger $\vec{A} = \langle
   A_1,\dots,A_n\rangle$, $\vec{w} = \langle w_1,\dots,w_q\rangle$ --
   is determined for every $i \in \{1,\dots,n\}$ and $a\in A$.
   For any rule $R \in \mcl R$ whose conclusion is in the form
   $\mtt{f}(x_1,\dots,x_n) \xrightarrowu{c} \psi$ and triggered by $\vec A,\vec w$
   let $X, Y, \Phi$ be the set of process and weight function
   variables involved in $R$ as per Definition~\ref{def:wfsos-rule}.
   Then, for any substitution $\sigma:X\cup Y  \to  T^\Sigma\emptyset$
   and map $\theta : \Phi \to \fw T^\Sigma\emptyset$ such that:
   \begin{enumerate}
    \item
      $\sigma(x_i) = p_i$ for $x_i \in X$;
    \item
      $\theta(\varphi^a_{ij}) = \rho$ 
      for each premise $x_i\xrightarrowu{a}\varphi^a_{ij}$ and 
      $\totalweight{\varphi^{a}_{ij}} = w_k$ of $R$, and for any 
      $\rho$ s.t.~$p_i\xrightarrowu{a} \rho$ and $\totalweight{\rho} = w_k$;
  \item
      $\sigma(y_k) = q$
      for each premise $\support{\varphi^{a_k}_{i_k j_k}}\ni y_k$ of 
      $R$, and for any $q\in T^\Sigma\emptyset$ 
      s.t.~$\theta(\varphi^{a_k}_{i_k j_k})(q) \neq 0$;
  \end{enumerate}
   there is $p \xrightarrowu{c} \rho$ where 
     $\rho \defeq \llparenthesis \psi[\theta]\rrparenthesis_{X\cup Y}[\sigma]$
   is the instantiated interpretation of the target $\Theta$-term $\psi$.
\end{definition}
The above definition is well-given since it is based on structural recursion
over ground $\Sigma$-terms (i.e.~the process $p$ in each triple $(p,a,\rho)$);
in particular, terms have finite depth and only structurally smaller terms
are used by the recursion (i.e.~the assumption of $p_i \xrightarrowu{a} \rho$
being defined for each $p_i$ in $p = \mtt{f}(p_1, \dots, p_n)$).
Moreover, for any trigger, operator, and conclusion label 
only finitely many rules have to be considered.

Finally we are able to state the main adequacy result for the
proposed format although the proof is postponed to Section~\ref{sec:cong-proof},
where we will take advantage of the bialgebraic framework.
\begin{theorem}[Congruence]\label{th:congruence-1}
Behavioural equivalence on ULTraSs induced by a
WFSOS specification 
is a congruence with respect to the process signature.
\end{theorem}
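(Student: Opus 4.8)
The plan is to route the statement through the bialgebraic framework of Turi and Plotkin, using the coalgebraic characterization of ULTraSs announced as our second contribution. Concretely, fix the behaviour endofunctor on \cat{Set} that realizes image-finite $\mfk W$-ULTraSs as coalgebras, namely $B_{\mfk W} X = (\pf \fw X)^A$: a $B_{\mfk W}$-coalgebra $c : X \to (\pf \fw X)^A$ sends a state $x$ and a label $a$ to the finite set $c(x)(a) = \{\rho \mid x \xrightarrowu{a} \rho\}$ of finitely supported weight functions reachable from $x$, which is exactly the datum of an image-finite transition relation $\rightarrowu$. Under this correspondence, behavioural equivalence for $B_{\mfk W}$-coalgebras --- the kernel of the unique morphism into the final $B_{\mfk W}$-coalgebra --- coincides with the bisimilarity of Definition~\ref{def:bisim}, as anticipated when that definition was introduced.

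The core step is to turn an image-finite WFSOS specification $\langle \mcl R, \llparenthesis\hole\rrparenthesis\rangle$ into an \emph{abstract GSOS law}, i.e.\ a natural transformation $\lambda : \Sigma(\id \times B_{\mfk W}) \Rightarrow B_{\mfk W} T^\Sigma$, where $\Sigma$ also denotes the polynomial functor of the process signature and $T^\Sigma$ the induced free monad. The component $\lambda_X$ takes an operator $\mtt f$ applied to pairs $\langle x_i, \beta_i\rangle$, where each $\beta_i \in B_{\mfk W} X$ records the labelled behaviour of the $i$-th argument, and returns, for each conclusion label $c$, the finite set of weight functions obtained exactly as in Definition~\ref{def:induced-ultras}: for every rule of $\mcl R$ with source $\mtt f$ and conclusion label $c$ whose trigger $\langle \vec A, \vec w\rangle$ is matched by the $\beta_i$, and for every choice of witnesses $\theta, \sigma$ satisfying the positive, total-weight and support premises, one evaluates $\llparenthesis \psi[\theta]\rrparenthesis$. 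The finiteness proviso of Definition~\ref{def:wfsos-spec} guarantees that only finitely many rules contribute for each operator, conclusion label and trigger, and that each contributes finitely many functions, so the result indeed lands in $\pf \fw T^\Sigma X$. By construction, the operational model of $\lambda$ --- the canonical $B_{\mfk W}$-coalgebra on the initial algebra $T^\Sigma \emptyset$ --- is precisely the ULTraS induced by the specification.

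The main obstacle is proving that $\lambda$ is \emph{natural}, i.e.\ that it commutes with every renaming $\sigma : X \to Y$. Here the definitional constraints pay off: substitution on $\Theta$-targets and the action $\fw(\sigma)$ of \eqref{eq:fw-action} on weight functions intertwine exactly because the interpretation respects substitutions (Definition~\ref{def:wfsos-eval}), so that $\llparenthesis \psi[\theta]\rrparenthesis_X[\sigma] = \llparenthesis \psi[\sigma][\theta']\rrparenthesis_Y$ for the correspondingly renamed witnesses. The delicate points are the negative premises $x_i \centernot{\xrightarrowu{b}}$ together with the trigger mechanism, which must be shown stable under renaming --- this is where the GSOS-style dependence of the rule set on $\langle \vec A, \vec w\rangle$ enters --- and the support premises $\support{\varphi} \ni y_k$, for which the zerosumfree hypothesis imposed in Definition~\ref{def:wfsos-rule} prevents $\fw(\sigma)$ from annihilating a witnessed element and thus keeps the binding coherent (cf.\ the footnote to that definition).

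Once $\lambda$ is in hand, the conclusion is immediate from the bialgebraic theory \cite{klin:tcs11}. An abstract GSOS law induces a distributive law of the monad $T^\Sigma$ over the cofree comonad of $B_{\mfk W}$; it makes the initial algebra $T^\Sigma\emptyset$ a $\lambda$-bialgebra whose coalgebra part is the induced ULTraS, and it endows the final $B_{\mfk W}$-coalgebra with a compatible $\Sigma$-algebra structure. The unique $B_{\mfk W}$-coalgebra morphism from $T^\Sigma\emptyset$ into the final coalgebra is then simultaneously a $\Sigma$-algebra homomorphism. Its kernel is behavioural equivalence, and since the kernel of an algebra homomorphism is always a congruence for the signature, behavioural equivalence --- equivalently, the bisimilarity of Definition~\ref{def:bisim} --- is a congruence with respect to $\Sigma$, as claimed.
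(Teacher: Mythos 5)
Your proposal is correct and follows essentially the same route as the paper: it realizes image-finite $\mfk W$-ULTraSs as coalgebras for $(\pf\fw\hole)^A$, identifies bisimilarity with behavioural equivalence, packages the specification as a GSOS distributive law $\lambda$ whose operational model on $T^\Sigma\emptyset$ is the induced ULTraS (the paper's Soundness theorem), and concludes by Turi--Plotkin's bialgebraic congruence result. The only point the paper makes explicit that you leave implicit is the existence of a final $(\pf\fw\hole)^A$-coalgebra, which is needed before the final-bialgebra argument can be invoked.
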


\begin{remark}[Expressing interpretations]\rm
  The definition of a weight term interpretation can be done in many
  ways, such as structural recursion or $\lambda$-iteration
  \cite{bm:cmcs12}.  For instance, every family of maps:
\[
  h_X : \Theta\fw T^\Sigma(X) \to \fw T^\Sigma(X)
  \qquad 
  b_X : X \to \fw T^\Sigma(X)
\]
(respecting substitutions) uniquely extends to an interpretation by
structural recursion on $\Theta$-terms where maps $h_X$ and $b_X$
define the inductive and base cases respectively.  Moreover, these
maps can be easily given e.g. as equations, as we will show in
Section~\ref{sec:examples}.
\end{remark}

\section{Applications and Examples}
\label{sec:examples}
In this Section we illustrate how existing systems and specification
formats are covered by the proposed WFSOS format.  In particular, to
illustrate the use of the format we give a detailed WFSOS
specification for PEPA \cite{hillston:pepabook,hillston05}. Then,
instead of describing other specific cases, we devote the rest of the
section to comparing WFSOS with some existing GSOS formats for systems
expressible in the ULTraS framework.

\subsection{WFSOS for PEPA}\label{sec:pepa-wfsos}

In PEPA \cite{hillston:pepabook,hillston05}, processes are terms over the grammar:
\begin{equation}
  \label{eq:pepa-grammar}
  P ::= (a,r).P \mid P + P \mid P \pepasync{L} P 
  \mid P\setminus L \mid \mtt{x} 
\end{equation}
where $a$ ranges over a fixed set of labels $A$, $L$ over subsets of $A$,
$r$ over $\mbb R^+$, and $\mtt{x}$ over a fixed set of process constants symbols $X$.
Process symbols are associated to processes by terms of the form $\mtt{x} \defeq P$
such that there is exactly one term for every symbol $\mtt x \in X$.
The semantics of process terms is usually defined by the inference rules
in Figure~\ref{fig:pepa-classic-sos}
\begin{figure}
\[\begin{array}{c}
  \frac{}{(a,r).P \xrightarrowg{a,r} P}
\quad
  \frac{P_1 \xrightarrowg{a,r} Q}{P_1 + P_2 \xrightarrowg{a,r} Q}
\quad
  \frac{P_2 \xrightarrowg{a,r} Q}{P_1 + P_2 \xrightarrowg{a,r} Q}
\quad
  \frac{P \xrightarrowg{a,r} Q}{\mtt x \xrightarrowg{a,r} Q
  }\ \mtt x \defeq P
\quad  
  \frac{P \xrightarrowg{a,r} Q}{P \setminus L \xrightarrowg{a,r} Q
  }\  a \notin L
\quad  
  \frac{P \xrightarrowg{a,r} Q}{
    P \setminus L \xrightarrowg{\tau,r} Q
  }\ a \in L
\\
  \frac{P_1 \xrightarrowg{a,r}Q}{
    P_1 \pepasync{L}P_2 \xrightarrowg{a,r} Q \pepasync{L} P_2
  }\ a \notin L
\quad
  \frac{P_2 \xrightarrowg{a,r}Q}{
    P_1 \pepasync{L}P_2 \xrightarrowg{a,r} P_1 \pepasync{L} Q
  }\ a \notin L
\quad
  \frac{
    P_1 \xrightarrowg{a,r_1}Q_1 \quad P_2 \xrightarrowg{a,r_2}Q_2}{
    P_1 \pepasync{L}P_2 \xrightarrowg{a,R} Q_1 \pepasync{L} Q_2
  }\ a \in L
\end{array}\]
\caption{Structural operational semantics for PEPA.}
\label{fig:pepa-classic-sos}
\end{figure}
where $a \in A$, $r,r_1,r_2,R \in \mbb R^+$ 
(passive rates are omitted for simplicity) and $R$ depends only on
$r_1$,$r_2$ and the intended meaning of synchronisation.
For instance, in applications to performance evaluation 
\cite{hillston:pepabook}, rates model times and $R$ is 
defined by the \emph{minimal rate law}:
\begin{equation}
\label{eq:minimal-rate-law}
  R = \frac{r_1}{r_a(P_1)}\cdot
  \frac{r_2}{r_a(P_2)}\cdot
  \min(r_a(P_1),r_a(P_2))
\end{equation}
where $r_a$ denotes the apparent rate of $a$.
In applications to systems biology \cite{hillston05:cmsb},
rates model molecules concentrations and $R$ is
defined by the \emph{multiplicative law}:
$R = r_1 \cdot r_2$.

PEPA can be characterized by a specification in the WFSOS format where
the process signature $\Sigma$ is the same as \eqref{eq:pepa-grammar}
and weights are drawn from the monoid of positive real numbers under
addition extended with the $+\infty$ element (only for technical
reasons connected with the $\llparenthesis\hole\rrparenthesis$ and
process variables---differently from other stochastic process algebras
like EMPA \cite{bg98:empa}, PEPA does not allow instantaneous actions,
i.e.~with rate $+\infty$).  The intermediate language of weight terms
is expressed by the grammar:
\[
\theta ::= \varnothing \mid \diamondsuit_r(\theta) \mid \theta_1\oplus \theta_2 \mid
\theta_1\parallel_L \theta_2 \mid \xi \mid P
\]
where $r\in \mbb R^+_0$, $L \subseteq A$, $\xi$ range over weight
functions, and $P$ over processes. Note that the grammar is untyped
since it describes the terms freely generated by the signature $\Theta
= \{\varnothing,\diamondsuit_r,\oplus,\parallel_L\}$, over weight
function variables and processes.  The meaning of these operators is
given formally below by the definition of the interpretation
$\llparenthesis\hole\rrparenthesis$ which is introduced (by structural
recursion on $\Theta$-terms) alongside WFSOS rules for presentation
convenience. Intuitively $\varnothing$ is the constantly $0$ function,
$\diamondsuit_r$ reshapes its argument to have total weight $r$,
$\oplus$ is the point-wise sum and $\parallel_L$ parallel composition
e.g. by \eqref{eq:minimal-rate-law}.

For each action $a\in A$ and rate $r \in \mbb R^+$, a process
$(a,r).P$ presents exactly one $a$-labelled transition ending in the
weight function assigning $r$ to the (sub)process denoted by the variable $P$ 
and $0$ to everything else. Hence, the \emph{action axiom} is expressed as
follows:
\[
  \frac{}{(a,r).P \xrightarrowu{a} \diamondsuit_r(P)}
  \qquad
  \llparenthesis\diamondsuit_r(\psi)\rrparenthesis_{X}(t) = 
    \begin{cases}
      \frac{r}{
        \raisebox{-1pt}{$\scriptstyle\left|
          \raisebox{-1pt}{$\scriptstyle
          \support{\llparenthesis\psi\rrparenthesis_{X}\!}
         $}\right|$}} & \text{if } \llparenthesis\psi\rrparenthesis_{X}(t)\neq 0\\
      0 & \text{otherwise}
    \end{cases}
\]
where $\diamondsuit_r$ reshapes\footnote{Since the interpretation
$\llparenthesis\hole\rrparenthesis$ has to cover the language freely generated
from $\Theta$, we can not use the (slightly more intuitive)  ``Dirac''
operator $\delta_r(P)$ -- where $P$ is restricted to be a process variable 
instead of a $\Theta$-term. Likewise, indexing $\delta_{r,P}$ over processes would
break substitution independence i.e.~naturality.} the function $\llparenthesis P
\rrparenthesis_{X}$ to equally distribute the weight $r$ over its support; in particular, since process variables will be interpreted as ``Dirac-like'' functions
(see below) $\diamondsuit_r(P)$ corresponds to the
weight function assigning $r$ to $\Sigma$-term denoted by $P$.

Complementary to the action axiom, $(a,r).P$ can not perform any action except for $a$:
\[
  \frac{}{(a,r).P \xrightarrowu{b} \varnothing}\ a \neq b
  \qquad
  \llparenthesis \varnothing \rrparenthesis_{X}(t) = 0
\]
This rule is required to obtain a functional ULTraS and 
is implicit in Figure~\ref{fig:pepa-classic-sos}
where disabled transitions are assumed with rate $0$
as in any specification in the Stochastic GSOS or Weighted 
GSOS formats. Without this rule, transitions would have 
been disabled in the non-deterministic layer 
i.e.~$(a,r).P\centernot{\xrightarrowu{b}}$.

Stochastic choice is resolved by the stochastic race, hence the rate
of each competing transition is added point-wise as in
Figure~\ref{fig:pepa-classic-sos} (and in the SGSOS and WGSOS
formats).  This passage belongs to the stochastic layer of the
behaviour (hence to the interpretation, in our setting) whereas the
selection of which weight functions to combine is in the
non-deterministic behaviour represented by the rules and, in
particular, to the labelling. Therefore, the \emph{choice rules}
became:
\[
  \frac{P_1 \xrightarrowu{a} \varphi_1 \quad P_2 \xrightarrowu{a} \varphi_2}{
    P_1 + P_2 \xrightarrowu{a} \varphi_1 \oplus \varphi_2
  }\qquad
  \llparenthesis \psi \oplus \varphi \rrparenthesis_{X}(t) = 
    \llparenthesis \psi \rrparenthesis_{X}(t) + 
    \llparenthesis \varphi \rrparenthesis_{X}(t)
\]
Likewise, process cooperation depends on the labels to select the
weight function to be combined. This is done in the next two rules:
one when the two processes cooperate, and the other when one process
does not interact on the channel:
\[
 \frac{P_1 \xrightarrowu{a} \varphi_1 \quad P_2 \xrightarrowu{a} \varphi_2}{
    P_1 \pepasync{L} P_2 \xrightarrowu{a} \varphi_1 \parallel_{L} \varphi_2
  }\ a \in L
\qquad
  \frac{P_1 \xrightarrowu{a} \varphi_1 \quad P_2 \xrightarrowu{a} \varphi_2}{
    P_1 \pepasync{L}P_2 \xrightarrowu{a}  
    (\varphi_1 \parallel_{L} P_2) \oplus
    (P_1 \parallel_{L} \varphi_2)
  }\ a \notin L
\]
The combination step depends on the meaning of the cooperation,
e.g.~in the case of \eqref{eq:minimal-rate-law}:
\[
\llparenthesis \psi \parallel_{L} \varphi \rrparenthesis_{X}(t) = 
\begin{cases}
  \frac{\llparenthesis \psi \rrparenthesis_{X}(t_1)}{
    \totalweight{\llparenthesis \psi \rrparenthesis_{X}}}\cdot
  \frac{\llparenthesis \varphi \rrparenthesis_{X}(t_2)}{
    \totalweight{\llparenthesis \varphi \rrparenthesis_{X}}}\cdot
  \min(\totalweight{\llparenthesis \psi \rrparenthesis_{X}},\totalweight{\llparenthesis
  \varphi \rrparenthesis_{X}}) & \text{if } t = t_1 \pepasync{L} t_2\\
  0 & \text{otherwise}
\end{cases}
\]

Each process is interpreted as a weight function over process terms.
This is achieved by a Dirac-like function assigning $+\infty$ to the
$\Sigma$-term composed by the aforementioned variable: $\llparenthesis
P \rrparenthesis_{X}(t) = +\infty$ if $P=t$, 0 otherwise.  The
infinite rate characterizes instantaneous actions as if all the mass
is concentrated in the variable; e.g., in interactions based on the
minimal rate law, processes are not consumed.  For the same reason, if
we were dealing with concentration rates and the multiplicative law, we
would assign $1$ to $P$.

The remaining rules for hiding and process symbol unfolding are straightforward\footnote{Specifications with equations, such as symbol unfolding rules,
are handled thanks to the recent results proposed in \cite{br2014:eqgsos}.}:
\[
\frac{P \xrightarrowu{a} \varrho}
{\mtt x \xrightarrowu{a} \varrho} \mtt x \defeq P
\qquad
\frac{P \xrightarrowu{a} \varphi}
{P \setminus L \xrightarrowu{a} \varphi}
\  a \notin L
\qquad
\frac{P \xrightarrowu{a} \varphi}
{P \setminus L \xrightarrowu{\tau} \varphi}
\ a \in L
\]

This completes the definition of $\llparenthesis\hole\rrparenthesis$
by structural recursion and hence the WFSOS specification of PEPA.  It
is easy to check that the induced ULTraS is functional and correspond
to the stochastic system of PEPA processes; that bisimulations on it
are stochastic bisimulations (and vice versa) and that bisimilarity is
a congruence with respect to the process signature.

\subsection{Segala GSOS}
In \cite[§5.3]{bartels04thesis}, Bartels proposed a GSOS specification 
format\footnote{
  Specifications in Bartels' Segala GSOS format yield
  GSOS distributive laws for Segala systems but, to the best of authors knowledge,
  it still is an open problem whether every such distributive law arise from
  some specification in the Segala format.
} for Segala systems (whence Segala GSOS), i.e.~ULTraS where weight functions
are exactly probability distributions. In this Section, we recall
Bartels' definition (with minor notational differences) and show
how specifications in the aforementioned format can be translated 
in specifications for ULTraS.

\begin{definition}[{\cite[§5.3]{bartels04thesis}}]
A \emph{GSOS rule for Segala systems} is an expression of the form: 
\vspace{-.2ex}\[\vspace{1ex}
\frac{\left\{x_i \xrightarrow{a} \varphi^a_{ij}\right\}_{
  1 \leq i \leq n,\ 
  a \in A_i,\ 
  1\leq j \leq m^a_i
}
\qquad 
\left\{x_i \centernot{\xrightarrow{b}}\right\}_{
  1 \leq i \leq n,\ 
  b \in B_i
}
\qquad 
\left\{\varphi^a_{ij} \Longrightarrow y_k\right\}_{
  1 \leq k \leq q
}}
{\mathtt{f}(x_1,\dots,x_n) \xrightarrow{c} w_1\cdot t_1 + \dots + w_m \cdot t_m}
\]
where:
\begin{itemize}\itemsep=0pt
\item 
  $\mtt f$ is an $n$-ary symbol from $\Sigma$;
\item 
  $X = \{x_i\mid 1 \leq i \leq n\}$, $Y = \{y_k\mid 1 \leq k \leq q\}$, and 
  $V = \{\varphi^a_{ij}\mid 1 \leq i \leq n,\ a \in A_i,\ 1\leq j \leq m^a_i\}$
  are pairwise distinct \emph{process} and \emph{probability distribution} variables
  respectively;
\item 
  $a,b,c \in A$ are labels and $A_i \cap B_i = \emptyset$ for any 
  $i \in \{1,\dots ,n\}$;
\item
  $\{t_i\mid 1 \leq i \leq m\}$ are target terms on variables
  $X$, $Y$ and (possibly duplicated) $V$;
\item
  $\{w_i \in (0,1] \mid 1 \leq i \leq m\}$ are weights associated
  to the target terms and s.t.~$w_1 + \dots + w_m = 1$; 
\end{itemize}
A \emph{GSOS specification for Segala systems}
is a set of rules in the above format containing 
finitely many rules for any source symbol $\mtt f$,
conclusion label $c$ and trigger $\vec C$.
\end{definition}

Segala GSOS specifications can be easily turned into WFSOS ones.  The
first two families of premises are translated straightforwardly to the
corresponding ones in our format; the third can be turned into those
of the form $\support{\varphi} \ni y$.  Targets of transitions describe
finite probability distributions and are evaluated to actual
probability distributions by a fixed interpretation of a form similar
to Definition~\ref{def:wfsos-eval} (although some care is needed to
handle copies of probability variables).

Let $\tilde V$ be the set of ``coloured'' variables from $V$ where the colouring is
used to distinguish duplicated variables (cf. \cite[§5.3]{bartels04thesis}).
In practice, duplicated variables are expressed by adding
``colouring'' operators to $\Theta$; their number is finite
and depends only on the set of rules since multiplicities
are fixed and finite for rules in the above format.
Given a substitution $\nu$ from $\tilde V$ to (finite) probability distributions 
over $T^\Sigma(X + Y)$, each $t_i$ is interpreted as the probability distribution:
\vspace{-.5ex}\[\vspace{-.5ex}
   \tilde t_i(t) \defeq 
   \begin{cases}
     \prod^{|\tilde V \cap var(t_i)|}_{k = 1} \nu(\varphi_{k})(t_k) &
       \text{if $t = t_i[\varphi_{k}/t_k]$ for $t_k \in T^\Sigma(X+Y)$} \\
     0 & \text{otherwise}
   \end{cases}
\]
and each target term $w_1\cdot t_1 + \dots + w_m \cdot t_m$ is
interpreted as the convex combination of each $\tilde t_i$.

\subsection{Weighted GSOS}
In \cite{ks2013:w-s-gsos}, the authors propose a GSOS format\footnote{Weighted
  GSOS specifications are proved to yield GSOS distributive
  laws for Weighted LTSs but it is currently an open question whether
  the format is also complete.
}  for Weighted LTSs that is
parametric in the commutative monoid $\mfk W$ and hence called $\mfk
W$-GSOS. The format subsumes many known formats for systems
expressible as $WLTS$: for instance, Stochastic GSOS specifications
are in the $\mbb R^+_0$-GSOS format and GSOS for LTS are in the $\mbb
B$-GSOS format where $\mbb B = (\{\ltrue,\lfalse\},\lor,\lfalse)$.  
In this Section, we recall Klin's definition
and show how $\mfk W$-GSOS specification can be translated in WFSOS ones.
\begin{definition}[{\cite[Def.~13]{ks2013:w-s-gsos}}]\label{def:wgsos-rule}
A $\mfk W$-GSOS rule is an expression of the form: 
\vspace{-.5ex}\[\vspace{-.5ex}
\frac{\left\{x_i \xrightarrowt{a} w_{ai}\right\}_{
  1 \leq i \leq n,\ 
  a \in A_i
}
\quad 
\left\{x_{i_k} \xrightarrowu{b_k,u_k} y_{k}\right\}_{
  1 \leq k \leq m
}}
{\mathtt{f}(x_1,\dots,x_n) \xrightarrowu{c,\beta(u_1,\dots,u_m)} t}
\]
where:
\begin{itemize}\itemsep=0pt
\item 
  $\mtt f$ is an $n$-ary symbol from $\Sigma$;
\item 
  $X = \{x_i\mid 1 \leq i \leq n\}$,
  $Y = \{y_k\mid 1 \leq k \leq m\}$ and
  $\{u_k\mid 1 \leq k \leq m\}$
  are pairwise distinct \emph{process} and 
  \emph{weight} variables;
\item
  $\{w_{ai} \in \mfk W \mid 1 \leq i \leq n,\ a\in A_i\}$ are \emph{weight constants}
  such that $w_{i_k} \neq 0$ for $1 \leq k \leq m$;
\item
  $\beta : W^m \to W$ is a multiadditive function on $\mfk W$;
\item 
  $a,b,c \in A$ are labels and $A_i \subseteq A$ for  
  $1 \leq i \leq n$;
\item
  $t$ is a $\Sigma$-term such that $Y \subseteq \mrm{var}(\mtt t) \subseteq X \cup Y$;
\end{itemize}
A rule is \emph{triggered} by a $n$-tuple $\vec C$
of \emph{enabled labels} s.t.~$A_i = C_i \subseteq A$
and by a family of weights $\{v_{ai}\mid 1\leq i \leq n,\ a \in C_i\}$ s.t.~$w_{ai} = v_{ai}$
A $\mfk W$-GSOS specification is a set of rules in the above
format such that there are only finitely many rules for the
same source symbol, conclusion label and trigger.
\end{definition}

Each rule describes the weight of $\mtt t$
in terms of weights assigned to each $y_k$ (i.e.~$u_k$) 
occurring in it; if two rules share the same symbol, label, trigger and 
target then their contribute for $\mtt t$ is added. 

The first step requires to make weight function explicit by premises 
like \vspace{-2pt}$x_i \xrightarrowu{a} \varphi^a_i$ (since WLTS are functional ULTraS $m^a_i
= 1$).  
Then, each premise $x_i \xrightarrowt{a} w_{ai}$ is translated into
$\totalweight{\varphi^a_i} = w_{ai}$. 
If $\mfk W$ is zerosumfree, the translation of a 
$\mfk W$-GSOS into a WFSOS is straightforward but, in general, it suffices to
combine rules sharing the same source, label and trigger into a single
one in the WFSOS format with the same source, label and trigger but
whose target is a suitable weight term containing each $\beta$ and
$\mtt t$ where every occurrence $y_k$ and $u_k$ is replaced with the
corresponding function variable (i.e.~$\varphi^{b_k}_{i_k}$),
wrapped in a ``colouring'' operator to express multiplicities like in
the case of Segala GSOS.  In fact, the WFSOS specification for PEPA in
Section~\ref{sec:pepa-wfsos} can be obtained from that given in
\cite{ks2013:w-s-gsos} following this procedure.

\vspace{-1ex}
\section{A coalgebraic presentation of ULTraS}\label{sec:coalg}
The aim of this section is to provide a brief coalgebraic characterization of
ULTraSs and their bisimulation, and to prove that these bisimulations
are congruence relations. 
These results are presented in 
Sections~\ref{sec:ultras-as-coalgebras} and \ref{sec:cong-proof}
respectively, and the general theory of \emph{abstract GSOS}
is briefly recalled in Section~\ref{sec:abstract-gsos} 
(we refer the interested reader to \cite{tp97:tmos,klin:tcs11}).

\subsection{Abstract GSOS}\label{sec:abstract-gsos}
In \cite{tp97:tmos}, Turi and Plotkin detailed an abstract
presentation of well-behaved structural operational
semantics for systems of various kinds. There syntax and 
behaviour of transition systems are modelled by algebras 
and coalgebras respectively. For instance, an (image-finite)
LTS with labels in $A$ and states in $X$ is seen as
a (successor) function $h : X \to (\pf X)^A$ mapping
each state $x$ to a function yielding, for each label $a$,
the (finite) set of states reachable from $x$
via $a$-labelled transitions i.e. $\{ y \mid x \xrightarrow{a} y\}$:
\vspace{-0.5ex}
\[
  y \in h(x)(a) \iff x \xrightarrow{a} y\text.
\vspace{-0.5ex}
\]
Functions like $h$ are \emph{coalgebras} for the 
(finite) \emph{labelled powerset functor} $(\pf\hole)^A$
over the category of sets and functions $\cat{Set}$.
In general, state based transition systems can be viewed 
as \emph{$B$-coalgebra} i.e.~sets (\emph{carriers})
enriched by functions (\emph{structures}) like $h : X\to BX$
for some suitable covariant functor $B : \cat{Set} \to \cat{Set}$. 
The $\cat{Set}$-endofunctor $B$ is often called \emph{behavioural} 
since it encodes the computational behaviour characterizing the 
given kind of systems. A \emph{morphism} from a $B$-coalgebra $h : X \to BX$ to 
$g : Y \to BY$ is a function $f : X \to Y$ such that
the coalgebra structure $h$ on $X$ is consistently
mapped to the coalgebra structure $g$ on $Y$ 
i.e. $g\circ f = Bf \circ h$.
\vspace{-.5ex}\begin{equation}\label{eq:coalg-morph}\vspace{-.5ex}
\begin{tikzpicture}[auto,font=\small,yscale=1.2,xscale=1.5,
	baseline=(current bounding box.center)]
		\node (n0) at (0,1) {\(X\)};
		\node (n1) at (0,0) {\(BX\)};
		\node (n2) at (1,1) {\(Y\)};
		\node (n3) at (1,0) {\(BY\)};
		\draw[->] (n0) to node [swap] {\(f\)} (n2);
		\draw[->] (n0) to node [swap] {\(h\)} (n1);
		\draw[->] (n1) to node [] {\(Bf\)} (n3);
		\draw[->] (n2) to node [] {\(g\)} (n3);
\end{tikzpicture}
\end{equation}
Therefore, $B$-coalgebras form the category $B\cat{-Coalg}$
of $B$-coalgebras and morphisms as above.

Two states $x,y \in  X$ are said to be \emph{observationally equivalent}
with respect to the coalgebraic structure $h : X \to BX$ if
they are equated by some coalgebraic morphism from $h$. Observational
equivalences are generalized to two (or more) systems in the form
of \emph{cocongruences} i.e.~as the pullbacks of morphisms extending
to a cospan for the $B$-coalgebas structures associated with the
given systems as pictured below.
\vspace{-.5ex}\[\vspace{-.5ex}
\begin{tikzpicture}[auto,font=\small,yscale=1.2,xscale=1.5,
	baseline=(current bounding box.center)]
		\node (n0) at (0,1) {\(X_1\)};
		\node (n1) at (2,1) {\(X_2\)};
		\node (n2) at (1,.5) {\(Y\)};
		\node (n3) at (0,0) {\(B X_1\)};
		\node (n4) at (2,0) {\(B X_2\)};
		\node (n5) at (1,-.5) {\(B Y\)};
		\node (n6) at (1,1.5) {\(R\)};
		\draw[->] (n0) to node [swap] {\(f_1\)} (n2);
		\draw[->] (n1) to node [] {\(f_2\)} (n2);
		\draw[->] (n0) to node [swap] {\(h_1\)} (n3);
		\draw[->] (n1) to node [] {\(h_2\)} (n4);
		\draw[->] (n2) to node [swap] {\(g\)} (n5);
		\draw[->] (n3) to node [swap] {\(Bf_1\)} (n5);
		\draw[->] (n4) to node [] {\(Bf_2\)} (n5);
		\draw[->] (n6) to node [swap] {\(p_1\)} (n0);
		\draw[->] (n6) to node [] {\(p_2\)} (n1);
	\begin{scope}[shift=($(n6)!0.28!(n2)$),scale=0.4]
		\draw +(-.5,.25) -- +(0,0)  -- +(.5,.25);
	\end{scope}
\end{tikzpicture}
\]
If the cospan $f_1,f_2$ is jointly epic, 
i.e.~$j\circ f_1 = k \circ f_2 \implies j = k$ for any $j,k : C \to Z$,
(in general if $\{f_i\}$ is an epic sink, hence $\{p_i\}$ is a monic source)
then the set $Y$ is isomorphic to the equivalence classes induced by $R$.
We refer the interested reader to \cite{rutten:universal}
for more information on the coalgebraic approach to process theory.

Dually, process syntax is modelled via algebras for endofunctors.
Every algebraic signature $\Sigma$ defines an endofunctor
$\Sigma X = \rotatebox[origin=c]{180}{$\prod$}_{\mathtt f\in \Sigma}
X^{ar(\mathtt f)}$ on $\cat{Set}$ such that every model for 
the signature is an algebra for the functor i.e.~ a set $X$ (carrier) together
with a function $g : \Sigma X \to X$ (structure). Morphisms of $\Sigma$-algebras
are functions satisfying the dual condition of \eqref{eq:coalg-morph}.
The set of $\Sigma$-terms with variables from a set $X$ is denoted
by $T^\Sigma X$ and the set of ground ones admits an obvious
$\Sigma$-algebra $a : \Sigma T^\Sigma\emptyset \to T^\Sigma\emptyset$
which is the \emph{initial $\Sigma$-algebra} in the sense that
for every other $\Sigma$-algebra $g$, there exists a unique morphism
from $a$ to $g$ i.e.~the \emph{inductive extension} of the underlying
function $f : T^\Sigma\emptyset \to X$. The construction $T^\Sigma$
is a functor, moreover, it is the monad freely generated by $\Sigma$.

In \cite{tp97:tmos}, Turi and Plotkin showed that structural
operational specifications for LTSs in the well-known image
finite GSOS format \cite{bloomIM:95} are in a one-to-one
correspondence with \emph{GSOS distributive laws}
i.e.~natural transformations of the sort of: \[\lambda : \Sigma(\id \times B) \Longrightarrow BT^\Sigma\text.\]
Natural transformations like $\lambda$ contain the information
needed to connect $\Sigma$-algebra structures and $B$-coalgebra
structures over the same carrier set
and capture the interplay between syntax and dynamics
at the core of the SOS approach. 
These structures are called \emph{$\lambda$-bialgebras} and are
composed by a carrier $X$ endowed with a $\Sigma$-algebra $g$ and a $B$-coalgebra
$h$ structure such that the diagram below commutes.
\[\vspace{-.5ex}
\begin{tikzpicture}[auto,font=\small,yscale=1.2,xscale=1.5,
	baseline=(current bounding box.center)]
		\node (n0) at (0,1) {\(\Sigma X\)};
		\node (n1) at (0,0) {\(\Sigma BX\)};
		\node (n2) at (2,0) {\(B\Sigma X\)};
		\node (n3) at (1,1) {\(X\)};
		\node (n4) at (2,1) {\(BX\)};
		\draw[->] (n0) to node [] {\(g\)} (n3);
		\draw[->] (n0) to node [swap] {\(\Sigma h\)} (n1);
		\draw[->] (n1) to node [] {\(\lambda_X\)} (n2);
		\draw[->] (n2) to node [swap] {\(Bg\)} (n4);
		\draw[->] (n3) to node [] {\(h\)} (n4);
\end{tikzpicture}\]
\vspace{-.5ex}
In particular, every $\lambda$-distributive law gives rise to a $B$-coalgebra structure
over the set of ground $\Sigma$-terms $T^\Sigma\emptyset$ and to a
$\Sigma$-algebra structure on the carrier of the final $B$-coalgebra.
These two structures are part of the initial and final 
$\lambda$-bialgebra respectively and therefore, because the unique
morphism from the former to the latter is both a $\Sigma$-algebra
and a $B$-coalgebra morphism, observational
equivalence on the system induced over $T^\Sigma\emptyset$
is a congruence with respect to the syntax $\Sigma$.

\subsection{ULTraSs as coalgebras}\label{sec:ultras-as-coalgebras}
ULTraSs are shown to be coalgebras for $\cat{Set}$ endofunctors
obtained by composing the functor $(\pf(\hole))^A$ of
non-deterministic labelled transition systems with functors capturing
the quantitative computational aspects.
For every set $X$ we denoted by $\fw X$ the set of finitely supported
weight functions. Actually, this extends to a functor $\fw: \cat{Set}
\to \cat{Set}$ over the category of sets, whose action on morphisms is
given by \eqref{eq:fw-action} (it is easy to check that identities and
compositions are preserved).

\begin{proposition}\label{prop:ultras-as-coalgebras}
For any $\mfk W$ and any $A$, coalgebras for $(\pf\fw\hole)^A$ are in 
one-to-one correspondence with $A$-labelled image-finite $\mfk W$-ULTraSs.
\end{proposition}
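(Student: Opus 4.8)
The plan is to establish the one-to-one correspondence by exhibiting, in each direction, an explicit passage between a $\mfk W$-ULTraS structure $(X,A,\rightarrowu)$ and a coalgebra $h : X \to (\pf\fw X)^A$, and then to verify that the two passages are mutually inverse. The guiding idea is that a coalgebra $h$ assigns to each state $x$ a function $h(x) : A \to \pf\fw X$, which sends each label $a$ to a finite set of finitely supported weight functions; this set is exactly meant to be $\{\rho \mid x \xrightarrowu{a} \rho\}$, the collection of weight functions reachable from $x$ via $a$-labelled transitions. So the correspondence should be the direct ``unbundling'' of the transition relation into a successor function, entirely parallel to the classical case $h : X \to (\pf X)^A$ for ordinary LTSs recalled earlier in Section~\ref{sec:abstract-gsos}.

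First I would define the map from ULTraSs to coalgebras. Given an image-finite $\mfk W$-ULTraS $(X,A,\rightarrowu)$, set
\[
  h(x)(a) \defeq \{\rho \mid x \xrightarrowu{a} \rho\}.
\]
Here I must check that this is well-typed: each $\rho$ in the set lies in $\fw X$, and the set itself lies in $\pf\fw X$. The membership $\rho \in \fw X$ holds because image finiteness guarantees (via $\omega$-boundedness) that every reachable $\rho$ is $\omega$-supported, i.e.~$\support{\rho} \in \pf X$, which is precisely the defining condition of $\fw X$ in the Notation. The finiteness of the set $h(x)(a)$, needed for $h(x)(a) \in \pf\fw X$, is exactly image finiteness ($\omega$-boundedness of the system). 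Conversely, given a coalgebra $h : X \to (\pf\fw X)^A$, I define
\[
  x \xrightarrowu{a} \rho \defiff \rho \in h(x)(a),
\]
and I check that the resulting triple $(X,A,\rightarrowu)$ is an image-finite $\mfk W$-ULTraS: the transition relation sits inside $X \times A \times [X \to W]$ as required by Definition~\ref{def:ultras} (since $\fw X \subseteq [X \to W]$), and both boundedness conditions follow from $h(x)(a)$ being a finite subset of $\fw X$.

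It then remains to verify that these two assignments are mutually inverse. Both verifications are immediate unwindings of the two biconditionals above: starting from $\rightarrowu$, building $h$, and reading off its associated relation returns $x \xrightarrowu{a} \rho \iff \rho \in h(x)(a) \iff x \xrightarrowu{a} \rho$; the reverse round-trip is symmetric. I do not expect any genuine obstacle here, since the statement is essentially a repackaging lemma rather than a deep theorem; the only points that require care are the two typing checks, namely confirming that image finiteness of the system matches exactly the finiteness built into $\pf$ and the finite-support condition built into $\fw$. In particular one should note that the correspondence is purely set-theoretic and makes no use of the monoid addition of $\mfk W$ (consistent with the remark following Definition~\ref{def:ultras} that the monoidal structure is inessential to the bare notion of ULTraS), so no naturality or algebraic compatibility needs to be checked at this stage.
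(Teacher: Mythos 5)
Your proposal matches the paper's own proof: both define $h(x)(a) \defeq \{\rho \mid x \xrightarrowu{a} \rho\}$, use image finiteness to justify that these sets are finite and consist of finitely supported weight functions, and observe that the correspondence is bijective by unwinding the definitions. Your version merely spells out the inverse direction and the round-trip checks slightly more explicitly than the paper does.
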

\begin{proof}
  Any image finite $\mfk W$-ULTraS $(X,A,\rightarrowu)$ determines a
  coalgebra $(X,h)$ where, for any $x \in X$ and $a \in A$: $h(x)(a)
  \defeq \{\rho \mid x\xrightarrowu{a} \rho\}$.  Image finiteness
  guarantees that these sets are finite and that their elements are
  finitely supported weight functions from $X$ to the carrier of $\mfk
  W$.  Then, it is easy to check that the correspondence is bijective.
\end{proof}
The coalgebraic characterization is further justified by the
corresponding treatment of bisimilarity:
\begin{proposition}\label{prop:bisim-correspondence}
  Let $(X,A,\rightarrowu_X)$ and $(Y,A,\rightarrowu_Y)$ be two image finite 
  $\mfk W$-ULTraS. An equivalence relation between $X$ and $Y$ 
  is a bisimulation (in the sense of Definition~\ref{def:bisim}) 
  iff it is the kernel relation of an epic sink for the coalgebras
  associated with the two systems.
\end{proposition}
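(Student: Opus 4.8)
The plan is to show that the two characterizations of bisimulation coincide by translating the class-based conditions of Definition~\ref{def:bisim} into the coalgebraic language for the functor $(\pf\fw\hole)^A$ established in Proposition~\ref{prop:ultras-as-coalgebras}. The central observation is that an equivalence relation $R$ on (the disjoint union of) $X$ and $Y$ corresponds to a cospan of functions $f_1 : X \to Z$ and $f_2 : Y \to Z$ onto the quotient $Z = (X + Y)/R$, and that this cospan is jointly epic precisely because every point of $Z$ is the image of some state; $R$ is then recovered as the kernel relation $\{(x,y) \mid f_1(x) = f_2(y)\}$. So the real content is to establish that $R$ is a bisimulation in the sense of Definition~\ref{def:bisim} if, and only if, the maps $f_1, f_2$ carry the coalgebra structures $h_X, h_Y$ to a common coalgebra structure $g$ on $Z$, i.e.~that the cospan extends to a cospan of coalgebra morphisms as in the cocongruence diagram of Section~\ref{sec:abstract-gsos}.

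First I would unwind what it means for $f_1$ to be a coalgebra morphism: we need $g \circ f_1 = (\pf\fw f_1)^A \circ h_X$, and symmetrically for $f_2$. Computing the action of the functor, $\fw(f_1)(\varphi)$ is the weight function on $Z$ sending each class $C$ (viewed as an element of $Z$) to $\sum_{x \in f_1^{-1}(C)} \varphi(x) = \varphi(C)$, by the definition of the $\fw$-action \eqref{eq:fw-action} and of $\varphi(C)$ in the Notation block. Thus $(\pf\fw f_1)(h_X(x)(a))$ is exactly the set $\{\,\lambda C.\varphi(C) \mid x \xrightarrowu{a}_X \varphi\,\}$ of pushed-forward weight functions, indexed by equivalence classes. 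The key step is therefore to verify that the existence of a well-defined $g$ making both squares commute is equivalent to the ``back-and-forth'' matching condition of Definition~\ref{def:bisim}: that whenever $f_1(x) = f_2(y)$, the two sets of pushed-forward functions $\{\lambda C.\varphi(C) \mid x \xrightarrowu{a}_X \varphi\}$ and $\{\lambda C.\psi(C) \mid y \xrightarrowu{a}_Y \psi\}$ coincide. A single $g$ exists exactly when the pushforward of the transition set depends only on the $R$-class, which is precisely this equality of sets; and equality of these sets of class-indexed functions unpacks, element by element, into the two matching clauses requiring for each $\varphi$ a $\psi$ (and vice versa) agreeing on every class $C$.

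For the two directions I would argue as follows. Given a bisimulation $R$, define $g$ on a class $[x]$ by taking the common pushed-forward transition set; the matching clauses guarantee this is independent of the representative and that both coalgebra squares commute, so $R$ is the kernel of the jointly epic sink $f_1,f_2$. Conversely, given such an epic sink with common target $g$, commutativity of the squares forces the pushforwards of the transition sets of any two $R$-related states to be equal, and reading this equality of sets back as membership statements yields exactly the matching clauses; hence the kernel of the sink is a bisimulation. I expect the main obstacle to be the careful bookkeeping in this set-level equality: one must check that two weight functions $\varphi$ and $\psi$ have equal pushforwards along $f_1,f_2$ iff $\varphi(C) = \psi(C)$ for every class $C$ (which is immediate from the computation above), and then that equality of the two \emph{sets} of pushforwards is genuinely equivalent to the symmetric existential matching, rather than merely implied by it. This requires noting that two finite sets of functions agree as sets exactly when each member of one is matched by an equal member of the other, which is precisely the bidirectional phrasing of Definition~\ref{def:bisim}; the finiteness supplied by image-finiteness ensures there are no subtleties about infinite indexing.
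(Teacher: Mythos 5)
The paper states this proposition without proof, so there is nothing to compare against literally; your argument is exactly the intended unfolding of the cocongruence machinery set up in Section~\ref{sec:abstract-gsos}, and its key computation is correct: $\fw(f_i)$ applied along the quotient map sends $\varphi$ to the class-indexed function $\lambda C.\varphi(C)$, so equality of the two pushed-forward transition sets is precisely the back-and-forth clause of Definition~\ref{def:bisim} (note that set equality is mutual matching for arbitrary sets, so the appeal to finiteness there is harmless but unnecessary). The one point you should make explicit is the well-definedness of $g$ on classes meeting only one of the two carriers: read literally, Definition~\ref{def:bisim} constrains only pairs $(x,y)$ with $x\in X$ and $y\in Y$, so for a class containing two $X$-states but no $Y$-state the ``common pushed-forward transition set'' is not forced to be common; this is an imprecision inherited from the paper's phrasing rather than a flaw in your strategy, and it disappears once the matching clauses are understood to apply to all $R$-related pairs in the disjoint union $X+Y$.
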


Notice that we have used the behavioural equivalence
(i.e.~cocongruences) instead of Aczel-Mendler's coalgebraic
bisimulation.  The two notions coincide if the behavioural functor
preserves weak pullbacks, but in general this is not the case for
$(\pf\fw\hole)^A$ (since $\fw$ does not \cite{ks2013:w-s-gsos}).
Actually, this property about $\fw$ depends only on the underlying
monoid, as it is equivalent to the \emph{row-column property} \cite{moss1999:coalgebraic} as stated by the following Lemma. 
This property can be easily checked, and more importantly holds 
for most monoids of interest.
\medskip

\noindent
\begin{minipage}{11cm}
\begin{lemma}
Coalgebraic bisimulation and behavioural equivalence coincide
on every ULTraS iff the underlying monoid satisfies the 
\emph{row-column property}
i.e.~for any two vectors
$(w_i)_{i=1\dots n}$ $(v_i)_{i=1\dots m}$ 
s.t.~$\sum_{i=1}^n w_i = \sum_{i=1}^m v_i = s$ there exists a 
matrix $(u_{ij})_{i=1\dots n,\,j=1\dots m}$
s.t.~$\sum_{i=1}^n u_{ij} = v_j$ for each $j=1\dots m$
and $\sum_{j=1}^m u_{ij} = w_i$ for each $i=1\dots n$
as pictured aside. 
\end{lemma}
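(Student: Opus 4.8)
The plan is to reduce the statement to a purely combinatorial property of the functor $\fw$, via the standard bridge between weak-pullback preservation and the coincidence of the two equivalences. Recall the classical coalgebraic fact that for any $\cat{Set}$-endofunctor $B$, Aczel--Mendler bisimilarity and behavioural equivalence agree on every $B$-coalgebra if and only if $B$ preserves weak pullbacks; the nontrivial ``only if'' half I would obtain by the contrapositive, turning a non-preserved weak pullback into two coalgebras carrying a behavioural equivalence that is not a bisimulation (alternatively one cites this result directly). Applying this to $B=(\pf\fw\hole)^A$, it suffices to characterise when $(\pf\fw\hole)^A$ preserves weak pullbacks.

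First I would factor out the outer functors. Since $(\hole)^A$ preserves weak pullbacks (being a product/right-adjoint functor) and $\pf$ is known to preserve them, and since weak-pullback preservation is closed under composition, $\fw$ preserving weak pullbacks already forces $(\pf\fw\hole)^A$ to do so. For the converse I would argue directly: if $\fw$ fails to preserve the pullback $P$ of some cospan $X\xrightarrow{f}Z\xleftarrow{g}Y$, witnessed by a compatible pair $(\rho,\sigma)\in\fw X\times_{\fw Z}\fw Y$ with no preimage in $\fw P$, then taking $A=1$ together with the singletons $\{\rho\}\in\pf\fw X$ and $\{\sigma\}\in\pf\fw Y$ yields a compatible element of the composite functor whose preimage in $\pf\fw P$ would have to contain some $\tau\in\fw P$ projecting onto $(\rho,\sigma)$ --- which by assumption does not exist. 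Hence $(\pf\fw\hole)^A$ preserves weak pullbacks exactly when $\fw$ does.

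The heart of the proof is then the equivalence ``$\fw$ preserves weak pullbacks $\iff$ $\mfk W$ has the row--column property''. Here I would use that the pullback $P$ of $f,g$ decomposes over the base as a disjoint union of fibrewise products, $P=\bigsqcup_{z\in Z} f^{-1}(z)\times g^{-1}(z)$. Preservation means that every compatible pair $(\rho,\sigma)$ --- i.e.\ one with $\fw f(\rho)=\fw g(\sigma)$, so that $\sum_{x\in f^{-1}(z)}\rho(x)=\sum_{y\in g^{-1}(z)}\sigma(y)$ for each $z$ --- admits a joint $\tau\in\fw P$ with $\fw\pi_1(\tau)=\rho$ and $\fw\pi_2(\tau)=\sigma$. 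Because the defining equations of $\tau$ split along the fibres, this reduces, fibre by fibre, to finding a matrix $(u_{xy})_{x\in f^{-1}(z),\,y\in g^{-1}(z)}$ whose row sums recover $(\rho(x))_x$ and whose column sums recover $(\sigma(y))_y$, with common sum $s$ equal to the shared fibre total --- precisely the row--column property. Restricting each fibre to the (finite) supports of $\rho$ and $\sigma$ keeps the resulting $\tau$ finitely supported, so it indeed lands in $\fw P$. Conversely, any failure of the row--column property for vectors $(w_i),(v_j)$ with common sum $s$ is realised by the one-point base $Z=1$ with $\rho=(w_i)$ and $\sigma=(v_j)$, exhibiting a non-preserved weak pullback.

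I expect the main obstacle to be the ``only if'' half of the abstract bridge, namely that coincidence of the two equivalences forces weak-pullback preservation: one must produce a genuine coalgebraic counterexample rather than a mere failure of preservation at the level of the functor. The lifting argument of the second paragraph localises this to $(\pf\fw\hole)^A$ and reduces it to $\fw$, but turning the non-preserved pullback into two concrete ULTraSs with behaviourally equivalent yet non-bisimilar states still requires the standard span/cospan construction (or a direct appeal to the coalgebraic literature). Every other step is routine once the fibrewise decomposition is in place.
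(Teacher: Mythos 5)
Your overall architecture---reduce the coincidence of the two equivalences to a limit-preservation property of $\fw$, then characterise that property fibrewise as a marginal-matrix condition---is the right one, and it is essentially what the paper delegates to the cited literature (the lemma is stated without an explicit proof). But there is a genuine gap in the middle of your chain: you use preservation of \emph{all} weak pullbacks as the bridge, and both of the resulting biconditionals are false. First, the classical result is one-directional: weak-pullback preservation implies that Aczel--Mendler bisimilarity and behavioural equivalence coincide, but the converse is not a theorem; the condition actually equivalent to ``every kernel of a coalgebra homomorphism is a bisimulation'' is weak preservation of \emph{kernel pairs}, which is strictly weaker. Second, and correspondingly, ``$\fw$ weakly preserves pullbacks iff $\mfk W$ has the row--column property'' is false: weak pullback preservation additionally requires $\mfk W$ to be zerosumfree (this is what preservation of preimages costs). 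Your fibrewise argument silently breaks exactly there: over a point $z$ with $f^{-1}(z)=\emptyset$ but $g^{-1}(z)\neq\emptyset$, compatibility only forces $\sum_{y\in g^{-1}(z)}\sigma(y)=0$; if $\mfk W$ has non-trivial inverses, $\sigma$ may be non-zero on that fibre, and no matrix with zero rows can produce its column sums---the row--column property, whose vectors are indexed from $1$, gives you nothing here. A concrete witness is $\mfk W=(\mbb Z,+,0)$: it satisfies the row--column property (so the lemma asserts the equivalences coincide) yet $\fw$ does not preserve weak pullbacks, so your chain passes through a false intermediate. (The paper's own prose just before the lemma makes the same informal identification, but the lemma as stated is only correct via the kernel-pair route.)

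The repair is to run your argument with kernel pairs (equivalently, pullbacks of a surjection against itself) in place of arbitrary pullbacks: the abstract bridge ``coincidence on all coalgebras iff weak preservation of kernel pairs'' is the correct citation; your singleton argument reducing $(\pf\fw\hole)^A$ to $\fw$ goes through unchanged; and for a kernel pair the two legs have the same fibres, so the degenerate case disappears and your fibrewise reduction to the row--column property works in both directions, provided you index each fibre's matrix by a common finite superset of both supports (padding with zero rows and columns) rather than by the two supports separately. Your one-point-base counterexample for the converse direction is also fine once phrased as a kernel pair, e.g.\ of the unique map $X+Y\to 1$, after which the standard construction turns it into two ULTraSs with behaviourally equivalent but non-bisimilar states.
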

\end{minipage}\hfill
$\begin{array}{cccc|c}
  u_{1,1} & u_{1,2} & \cdots & u_{1,n} & w_1   \\
  u_{2,1} & u_{2,2} & \cdots & u_{2,n} & w_2   \\
  \vdots  & \vdots  & \ddots & \vdots  & \vdots\\
  u_{m,1} & u_{m,2} & \cdots & u_{m,n} & w_n   \\ \hline
  v_1     & v_2     & \cdots & v_m     & s
\end{array}$

\medskip

Finally, we need the following technical result to unleash the
general machinery of abstract GSOSs.
\begin{proposition}\label{prop:final-coalg}
The category of coalgebras for $(\pf\fw\hole)^A$ has a final object.
\end{proposition}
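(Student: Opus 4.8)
The plan is to reduce the existence of a final coalgebra to a general accessibility argument: every $\kappa$-accessible endofunctor on $\cat{Set}$ (equivalently, every bounded one) admits a final coalgebra, realised as the limit of its terminal sequence $1 \leftarrow B1 \leftarrow B^2 1 \leftarrow \cdots$, which stabilises at an ordinal stage determined by $\kappa$ (this is the standard final-coalgebra theorem for accessible $\cat{Set}$-endofunctors, as in the work of Barr, Worrell and Ad\'amek, and the locally presentable setting of Ad\'amek--Rosick\'y). It therefore suffices to show that the behavioural functor $B \defeq (\pf\fw\hole)^A$ is accessible, which I would do by treating its three constituents separately and then invoking closure of accessible functors under composition.

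First I would check that $\fw$ is finitary, i.e.~preserves filtered colimits. For a filtered colimit $X = \mathrm{colim}_i X_i$ in $\cat{Set}$ (a directed union up to isomorphism), any $\rho \in \fw X$ has finite support $\support{\rho}$, which hence factors through some $X_i$; since $\rho$ is determined by its restriction to $\support{\rho}$, it arises from an element of $\fw X_i$, so the canonical comparison map $\mathrm{colim}_i \fw X_i \to \fw X$ is surjective. A symmetric use of filteredness shows that two representatives agreeing in $X$ already agree in some $X_k$, giving injectivity. Thus $\fw$ is $\omega$-accessible, exactly as for the finite-multiset and free-module functors.

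Next, $\pf$ is the finite powerset functor, which is well known to be finitary, so the composite $\pf\fw$ is again finitary. Finally, for a fixed label set $A$ the functor $(\hole)^A$ is $\kappa$-accessible for every regular cardinal $\kappa > |A|$ (and simply finitary when $A$ is finite). Fixing such a $\kappa$ --- e.g.~$\kappa = |A|^+$ --- both $\pf\fw$ (being finitary, hence $\kappa$-accessible) and $(\hole)^A$ are $\kappa$-accessible, and since accessible functors are closed under composition, $B = (\pf\fw\hole)^A$ is $\kappa$-accessible. The cited theorem then supplies the final object, completing the argument.

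The step carrying the real content is the accessibility of $\fw$: the other factors are either standard ($\pf$ and $(\hole)^A$) or formal (closure under composition). The only genuine care is in handling a possibly infinite label set $A$, which is why I phrase the argument with a general regular cardinal $\kappa > |A|$ rather than with $\omega$. I would also stress that no weak-pullback preservation is used anywhere, so the failure of that property for $\fw$ noted earlier in the excerpt is irrelevant to the existence of the final coalgebra.
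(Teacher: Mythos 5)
Your argument is correct and is the standard route to this result: the paper itself states Proposition~\ref{prop:final-coalg} without an explicit proof, but the intended justification is exactly that $(\pf\fw\hole)^A$ is an accessible $\cat{Set}$-endofunctor (with $\fw$ finitary as the monoid-valued functor, as in Klin's work cited here), hence admits a final coalgebra. Your only glossed step --- injectivity of the comparison map for $\fw$ on filtered colimits, where one must realise the identifications on the finite union of supports at a finite stage before summing weights --- is routine and does not affect correctness.
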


\subsection{WFSOS as bialgebras}\label{sec:cong-proof}

In this subsection we prove the congruence result for bisimulations
for ULTraSs induced by WFSOS specifications.  This result is obtained
by applying the abstract characterization of well-behaved structural
operational semantics given by Turi and Plotkin in \cite{tp97:tmos}.
There syntax and behaviour of various kind of transition systems are
modelled by algebras and coalgebras connected by suitable distributive
laws describing the interplay between syntax and behaviour.

In the case of ULTraSs, the GSOS distributive laws are natural transformations
of the form of:\vspace{-.5ex}
\begin{equation}\label{eq:wfsos-nat}\vspace{-.5ex}
  \lambda : 
  \Sigma(\hole \times (\pf\fw(\hole))^A) 
  \Longrightarrow
  (\pf\fw T^\Sigma(\hole))^A
\end{equation}
where $A$ is the set of labels, $\mfk W$ is the commutative monoid
of weights,
$\Sigma = \rotatebox[origin=c]{180}{$\prod$}_{\mathtt f\in \Sigma}
(\hole)^{ar(\mathtt f)}$ 
is the syntactic endofunctor induced by the process signature $\Sigma$,
and $T^\Sigma$ is the free monad for $\Sigma$.

Every natural transformation $\lambda$ as above induces a
coalgebra structure $h_\lambda$ (defined by
structural recursion \cite[Th.~5.1]{tp97:tmos}) over ground
$\Sigma$-terms as the only function $h_\lambda : T^\Sigma\emptyset \to
(\pf\fw(T^\Sigma\emptyset))^A$ such that:
\vspace{-.5ex}
\begin{equation}\label{eq:induced-coalg}
h_\lambda \circ a = 
(\pf\fw T^\Sigma(a^\#))^A\circ\lambda_X \circ \Sigma\langle id,h_\lambda\rangle
\end{equation}
where $a^\# : T^\Sigma T^\Sigma\emptyset \to T^\Sigma\emptyset$ is the
inductive extension of $a$. Then, by general results from the
bialgebraic framework (cf.~\cite[Cor.~7.3]{tp97:tmos}), every
behavioural equivalence on $h_\lambda$ is also a
congruence on $T^\Sigma\emptyset$.

We can now provide the connection between WFSOS specifications and
GSOS distributive laws for ULTraSs and between systems and
coalgebras they induce over ground $\Sigma$-terms. Then, the results
from abstract GSOS are transferred to the WFSOS specification format,
completing the proof of Theorem~\ref{th:congruence-1}.

\begin{theorem}[Soundness]\label{thm:wfsos-soundness}
Every specification $\langle\mcl R,\llparenthesis\hole\rrparenthesis\rangle$
yields a natural transformation $\lambda$ as in \eqref{eq:wfsos-nat}
such that $h_\lambda$ and the ULTraS induced by $\langle\mcl R,
\llparenthesis\hole\rrparenthesis\rangle$ coincide.
\end{theorem}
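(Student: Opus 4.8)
The plan is to establish a bijection between WFSOS specifications and GSOS distributive laws of the shape in \eqref{eq:wfsos-nat}, and then to verify that the coalgebra $h_\lambda$ recovered from the induced law via \eqref{eq:induced-coalg} agrees with the transition relation $\rightarrowu$ of Definition~\ref{def:induced-ultras}. First I would fix a WFSOS specification $\langle\mcl R,\llparenthesis\hole\rrparenthesis\rangle$ and, for each finite set $X$, define the component $\lambda_X$ of the natural transformation directly on arguments of the form $\mtt f(\langle x_1,\xi_1\rangle,\dots,\langle x_n,\xi_n\rangle)$, where each $\xi_i\in(\pf\fw X)^A$ supplies, for every label $a$, the finite set $\xi_i(a)\subseteq\fw X$ of weight functions reachable from $x_i$. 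From these data one reads off the trigger $\vec A,\vec w$ exactly as in Definition~\ref{def:induced-ultras}: the enabled-label sets $A_i$ are those $a$ with $\xi_i(a)\neq\emptyset$, and the total-weight constants come from the $\totalweight{\cdot}$ premises. For each rule $R\in\mcl R$ matching this operator, conclusion label $c$ and trigger, and for each choice of substitution $\sigma$ and map $\theta$ satisfying conditions (1)--(3), I would emit the element $\llparenthesis\psi[\theta]\rrparenthesis_{X\cup Y}[\sigma]\in\fw T^\Sigma X$ into the $c$-component of the output. The image-finiteness clause of Definition~\ref{def:wfsos-spec} guarantees the resulting set is finite, so $\lambda_X$ indeed lands in $(\pf\fw T^\Sigma X)^A$.

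The second step is to check naturality of $\lambda$, i.e.~that $\lambda$ commutes with the action of every function $g:X\to Y$. Here the key input is that the interpretation $\llparenthesis\hole\rrparenthesis$ respects substitutions (Definition~\ref{def:wfsos-eval}), together with the compatibility of the $\fw$-action \eqref{eq:fw-action} with substitution and the fact that the premises --- total weights and supports --- are preserved under relabelling of process variables. Since the trigger is determined purely by the non-deterministic/weight premises and is invariant under $g$, the set of matching rules and admissible $(\sigma,\theta)$ transforms coherently, and the substitution-respecting square for $\llparenthesis\hole\rrparenthesis$ closes the naturality square componentwise. I would package this as a short diagram chase once the component is fixed.

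The third and most substantial step is to prove $h_\lambda=\rightarrowu$. Both sides are defined by structural recursion on ground $\Sigma$-terms, so I would argue by induction on term depth, showing that for $p=\mtt f(p_1,\dots,p_n)$ the equation \eqref{eq:induced-coalg} unfolds precisely to the closure condition of Definition~\ref{def:induced-ultras}. Concretely, the inner $\Sigma\langle\id,h_\lambda\rangle$ supplies to $\lambda$ exactly the pairs $\langle p_i,\{\rho\mid p_i\xrightarrowu{a}\rho\}\rangle$ (using the inductive hypothesis on the structurally smaller $p_i$), so the trigger computed inside $\lambda$ matches the trigger of Definition~\ref{def:induced-ultras}; the outer $(\pf\fw T^\Sigma(a^\#))^A$ performs the flattening $a^\#$ that turns a $\Theta$-interpretation landing in $\fw T^\Sigma(T^\Sigma\emptyset)$ into one over $T^\Sigma\emptyset$, which is exactly the role played by the substitution $[\sigma]$ and the interpretation $\llparenthesis\psi[\theta]\rrparenthesis$ in the definition of the induced ULTraS. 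I expect the main obstacle to lie here, in reconciling the categorical bookkeeping --- the interaction between the monad multiplication $a^\#$, the $\fw$-action on the composite substitution, and the naturality of $\llparenthesis\hole\rrparenthesis$ --- with the elementwise description of $\rho\defeq\llparenthesis\psi[\theta]\rrparenthesis_{X\cup Y}[\sigma]$, so that the two formulations of the emitted weight function provably coincide. Once this identity is in place, the congruence result (Theorem~\ref{th:congruence-1}) follows immediately by invoking \cite[Cor.~7.3]{tp97:tmos} together with the final-object existence from Proposition~\ref{prop:final-coalg} and the bisimulation--behavioural-equivalence correspondence of Proposition~\ref{prop:bisim-correspondence}.
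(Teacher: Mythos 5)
Your proposal is correct and follows essentially the same route the paper takes (and leaves largely implicit): read off a component of $\lambda$ at each set from the rules and the interpretation, use the substitution-respecting condition of Definition~\ref{def:wfsos-eval} (together with the zerosumfree side condition, which is exactly what makes the support premises stable under non-injective relabelling) to get naturality, and then identify $h_\lambda$ with the induced ULTraS by the uniqueness of the structural-recursive solution of \eqref{eq:induced-coalg} against Definition~\ref{def:induced-ultras}. The only slip is insisting on \emph{finite} sets $X$ when defining the components --- naturality must be checked at every set --- but nothing in your construction actually uses finiteness, so this is cosmetic.
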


\begin{corollary}[Congruence]
Behavioural equivalence on the coalgebra over $T^\Sigma\emptyset$ induced by a
specification $\langle\mcl R,\llparenthesis\hole\rrparenthesis\rangle$
is a congruence with respect to the process signature $\Sigma$.
\end{corollary}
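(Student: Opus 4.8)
The plan is to derive the corollary directly from the Soundness Theorem (Theorem~\ref{thm:wfsos-soundness}) together with the general congruence result of the bialgebraic framework, so that essentially no new work beyond bookkeeping is required. First I would invoke Theorem~\ref{thm:wfsos-soundness} to obtain, from the given specification $\langle\mcl R,\llparenthesis\hole\rrparenthesis\rangle$, a natural transformation $\lambda$ of the shape in \eqref{eq:wfsos-nat}, i.e.~a GSOS distributive law for the behavioural functor $B = (\pf\fw\hole)^A$ over the free monad $T^\Sigma$. The same theorem guarantees that the coalgebra $h_\lambda$ induced by $\lambda$ via the structural-recursion equation \eqref{eq:induced-coalg} coincides with the ULTraS coalgebra induced by the specification; hence it suffices to prove that behavioural equivalence on $h_\lambda$ is a congruence.

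For this I would appeal to the abstract machinery recalled in Section~\ref{sec:abstract-gsos}. Since Proposition~\ref{prop:final-coalg} ensures that $B$ admits a final coalgebra, a distributive law $\lambda$ of the above form makes both the initial and the final $\lambda$-bialgebra available: $\lambda$ equips $T^\Sigma\emptyset$ with the coalgebra $h_\lambda$ (part of the initial $\lambda$-bialgebra) and equips the carrier of the final $B$-coalgebra with a $\Sigma$-algebra structure (part of the final $\lambda$-bialgebra). By \cite[Cor.~7.3]{tp97:tmos} the unique $\lambda$-bialgebra morphism from the former to the latter is simultaneously a $\Sigma$-algebra morphism and a $B$-coalgebra morphism. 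Its kernel is precisely behavioural equivalence on $h_\lambda$, and, being the kernel of a $\Sigma$-algebra homomorphism, it is by definition a congruence for $\Sigma$. Composing this with the identification supplied by Theorem~\ref{thm:wfsos-soundness} yields the claim.

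The only point demanding care is conceptual rather than computational: we work with behavioural equivalence (the kernel of the final morphism, i.e.~a cocongruence) and not with Aczel--Mendler bisimulation. This is deliberate, because $B = (\pf\fw\hole)^A$ need not preserve weak pullbacks (as $\fw$ does not, in general), so the two notions may differ. Fortunately the bialgebraic congruence argument only ever uses the final-morphism kernel, whose congruence property is insensitive to weak-pullback preservation; thus no extra hypothesis on $\mfk W$ is needed here. Consequently the genuine mathematical content has already been discharged in Theorem~\ref{thm:wfsos-soundness} (constructing $\lambda$ and verifying its naturality) and in Proposition~\ref{prop:final-coalg} (existence of the final coalgebra), leaving the corollary as a routine instantiation of \cite[Cor.~7.3]{tp97:tmos}.
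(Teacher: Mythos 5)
Your proposal is correct and follows essentially the same route as the paper: it combines Theorem~\ref{thm:wfsos-soundness} with the existence of a final coalgebra (Proposition~\ref{prop:final-coalg}) and the Turi--Plotkin result \cite[Cor.~7.3]{tp97:tmos} that the unique morphism from the initial to the final $\lambda$-bialgebra is simultaneously a $\Sigma$-algebra and a $B$-coalgebra morphism, so its kernel --- behavioural equivalence on $h_\lambda$ --- is a congruence. Your remark that the argument rests on cocongruences rather than Aczel--Mendler bisimulation, and hence needs no weak-pullback-preservation hypothesis on $\fw$, matches the paper's own discussion in Section~\ref{sec:ultras-as-coalgebras}.
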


\vspace{-1ex}
\section{Conclusions and future work}\label{sec:concl}
In this paper we have presented a GSOS-style format for specifying
non-deterministic systems with quantitative aspects. A specification
in this format is composed by a set of rules for the derivation of
judgements of the form $P \xrightarrowu{a} \psi$, where $\psi$ is a
term of a specific signature, and an \emph{interpretation} for these
terms as weight functions.  We have shown that a specification in this
format defines an ULTraS system. The expressivity of this format has
been shown by an example WFSOS specification for PEPA, and that WFSOS
subsumes other formats such as WGSOS and Segala SOS.  This format
induces naturally a notion of bisimulation, which we have proved to be
always a congruence. The proof of this result relies on a general
categorical presentation of non-deterministic systems with
quantitative aspects: we have shown that ULTraS systems are in
one-to-one correspondence with coalgebras of a precise class of
functors, parametric on the underlying weight structure. Taking
advantage of Turi-Plotkin's bialgebraic framework, we have proved that
the bisimulation induced by a WFSOS is always a congruence. This
allows for compositional reasoning in quantitative settings (e.g., for
ensuring performance properties).

Another consequence of this categorical characterization is that we
can prove that there are $\lambda$-distributive laws which cannot be
specified as WFSOS.  In fact, we could define a more expressive
format, but it would be quite more convolute and difficult to
use. Hence, we preferred to adopt this simpler but still quite
expressive format. The definition of a sound and complete format is
left as future work.

In \cite{gdl2012:treerules} the authors proposed the ${nt\!\mu\! f\theta/nt\!\mu\! x\theta}$
rule format for presenting Segala systems and such that the induced bisimilarity is a congruence. 
Because of the different expressivity of GSOS and (n)tree rules,
it would be of interest to generalize this format
to the wider range of behaviours covered by ULTraSs.

Although in this paper we have taken ULTraS systems as a reference,
WFSOS can be interpreted in other meta-models, such as FuTS
\cite{latella12:futsbisim}.  Like ULTraS, FuTS have
state-to-function transitions, but admit several distinct domains for
weight functions and hence can be read as ``composing in parallel''
distinct behaviours. The results in this paper readily extend to FuTS
since these systems can be seen as coalgebras for functors with a
specific shape: $(\pf\rotatebox[origin=c]{180}{$\prod$}_{\mfk W \in
  \mcl W}\fw(\hole))^A$ for $\mcl W$ being the set of admitted weight
domains.  In this context it is easy to formulate compositionality
results also for the framework for stochastic calculi proposed in
\cite{denicola13:ustoc}.  A coalgebraic understanding of FuTS is
presented in \cite{latella12:futsbisim} but covers only the
deterministic case (i.e.~$(\rotatebox[origin=c]{180}{$\prod$}_{\mfk W \in
  \mcl W}\fw(\hole))^A$), while ours is non-deterministic.

For sake of simplicity, we have characterized ULTraS systems using the
functor $\fw$. However, the results and definition presented here can
be further generalized by taking generic behavioural functors in place
of $\fw$, thus considering systems that are coalgebras for functors of
the form of $(\pf B(\hole))^A$. This would affect mostly the
evaluation $\llparenthesis\hole\rrparenthesis$, while only minor
changes to the rule format may be required in order to capture
interactions between $\pf$ and $B$ (like e.g.~the total weight
premises). This fact suggests to investigate systems with
\emph{stratified} (or ``staged'') behaviours via ``stratified''
specifications. We can develop general results at the abstract level
of bialgebraic structural operational semantics, aiming to provide
some modularity to the format. This line of research can be seen as
complementary to Mosses' Modular SOS \cite{mosses99:modularsos} and
recent developments towards a GSOS equivalent
\cite{moss13:modulargsos} (which still are more ``syntax bound'' since
the behavioural functor is not very changed by these compositions).

The categorical characterization of ULTraS systems paves the way for
further interesting lines of research. One is to develop
Hennessy-Milner style modal logics for quantitative systems at the
generality level of the ULTraS framework.  In fact, Klin has shown in
\cite{klin09:sosmlogic} that HML and CCS are connected by a
(contravariant) adjunction.  A promising direction is to follow this
connection between modal logic and SOS, taking advantage of the
bialgebraic presentation of ULTraS provided in this paper. Another direction
is to investigate the implications of recent developments in the coalgebraic understanding 
of internal moves for systems generalized by ULTraSs such as Weighted LTS \cite{mp2013:weak-arxiv} and Segala systems \cite{brengos2014:intmoves}.

\vspace{-2ex}
\paragraph{Acknowledgements} We thank Rocco De Nicola and Daniel Gebler
for useful discussion on the preliminary version of this paper.  This work is
partially supported by MIUR PRIN project 2010LHT4KM.
\vspace{-2ex}

\bibliographystyle{eptcs}
\bibliography{allbib}

\end{document}